\newtheorem{proposition}{Proposition}
\newtheorem{lemma}{Lemma}
\newtheorem{remark}{Remark}
\begin{document}
%
\title{Trajectory Design and Power Allocation for Drone-Assisted NR-V2X Network with Dynamic NOMA/OMA}
%
%
%
\author{Omid~Abbasi, \IEEEmembership{Student Member,~IEEE}, Halim~Yanikomeroglu, \IEEEmembership{Fellow,~IEEE}, Afshin~Ebrahimi, \IEEEmembership{Member,~IEEE}, and Nader~Mokari, \IEEEmembership{Senior Member,~IEEE} \thanks{This work was supported in part by Huawei Canada Co.,
Ltd. (\textit{Corresponding author: Afshin Ebrahimi})}
\thanks{O. Abbasi and A. Ebrahimi are with the Department of Electrical Engineering, Sahand University of Technology, Tabriz, Iran. O. Abbasi is also
with the Department of Systems and Computer Engineering, Carleton
University, Ottawa, ON K1S5B6, Canada. (e-mail: om\_abbasi@sut.ac.ir; omidabbasi@sce.carleton.ca; aebrahimi@sut.ac.ir)}\thanks{H. Yanikomeroglu is with the Department of Systems and Computer
Engineering, Carleton University, Ottawa, ON K1S5B6, Canada. (e-mail:
halim@sce.carleton.ca)}\thanks{N. Mokari is with the Department of Electrical and Computer Engineering,
Tarbiat Modares University, Tehran, Iran. (e-mail:
nader.mokari@modares.ac.ir)}}
\maketitle

\begin{abstract}
  In this paper, we find trajectory planning and power allocation for a vehicular network in which an unmanned-aerial-vehicle (UAV) is considered as a relay to extend coverage for two disconnected far vehicles. We show that in a two-user network with an amplify-and-forward (AF) relay, non-orthogonal-multiple-access (NOMA) always has better or equal sum-rate in comparison to orthogonal-multiple-access (OMA) at high signal-to-noise-ratio (SNR) regime. However, for the cases where i) base station (BS)-to-relay link is weak, or ii) two users have similar links, or iii) BS-to-relay link is similar to relay-to-weak user link, applying NOMA has negligible sum-rate gain. Hence, due to the complexity of successive-interference-cancellation (SIC) decoding in NOMA, we propose a dynamic NOMA/OMA scheme in which OMA mode is selected for transmission when applying NOMA has only negligible gain. Also, we show that OMA always has better min-rate than NOMA at high SNR regime. Further, we formulate two optimization problems which maximize the sum-rate and min-rate of the two vehicles. These problems are non-convex, and hence we propose an iterative algorithm based on alternating-optimization (AO) method which solves trajectory and power allocation sub-problems by successive-convex-approximation (SCA) and difference-of-convex (DC) methods, respectively. Finally, the above-mentioned performance is confirmed by simulations.

\end{abstract}

\begin{IEEEkeywords}
Non-orthogonal multiple access, aerial relaying, dynamic multiple access, trajectory design, power allocation, V2X.
\end{IEEEkeywords}

%
\IEEEpeerreviewmaketitle

\section{Introduction}
\label{intro}
Unmanned aerial vehicle (UAV) communication has attracted substantial attention in the context of 5G and beyond-5G due to its many advantages such as high mobility, low cost, and on-demand deployment \cite{valavanis2015future}. UAV-enabled communications have some advantages over terrestrial wireless communications. Due to the higher possibility of the existence of the line-of-sight (LOS) link between UAV and ground users \cite{lin2018sky}, there is less path loss for these links. Also, multipath fading rarely occurs in UAV-to-ground links. UAVs can be used as aerial base stations (BS) \cite{irem} and as relays \cite{zeng2016throughput} in practical scenarios that require
on-demand deployment \cite{zhan2011wireless}.\par
Vehicular communication has gained much attention in recent years due to its potential to enhance road safety and traffic efficiency \cite{vehicular-survey}. Furthermore, providing communication for vehicles improves the quality of infotainment (information and entertainment) experience in vehicles \cite{cheng2011infotainment}. This technology also facilitates emerging autonomous drivers. These applications require a considerable amount of data exchange, high capacity, and low latency. IEEE 802.11p is a standard which adds wireless access for vehicular environment (WAVE) \cite{wave} as dedicated-short-rang-communication (DSRC). However, IEEE 802.11p networks have coverage problem for vehicle-to-infrastructure (V2I) connections \cite{d2d-v}. Cellular-vehicle-to-everything (C-V2X) has advantages of high data rates and greater coverage than WAVE \cite{cv2x-noma}. Recently, IEEE 802.11p and C-V2X technologies are updated to IEEE 802.11bd and new radio (NR)-V2X to satisfy high reliability and low latency requirements \cite{NR-V2X}.  \par
  \par

\subsection{Related Works}
Non-orthogonal multiple access (NOMA) \cite{saito2013non} has attracted significant attention during 5G standardization \cite{ding2016application}. Also, cooperative NOMA has been considered for terrestrial networks in recent years. In \cite{ding2015cooperative}, near users acted as relay for far users. In \cite{liu2016cooperative}, the authors proposed a new cooperative NOMA protocol in which near NOMA users act as energy harvesting relays to help far users.  The authors in \cite{abbasi2018cooperative} studied the performance of a cooperative NOMA system with a dedicated full-duplex relay which worked in amplify-and-forward (AF) mode. The impact of relay selection on the performance of a cooperative NOMA system was studied in \cite{ding2016relay}. In \cite{kim2015non}, the authors exploited a  decode-and-forward relay to send information to far users. The authors of \cite{kim2015non} derived closed-form expressions for outage probability and ergodic sum-rate of NOMA users. The authors in \cite{noma-inspired} proposed a cooperative relaying
system based on NOMA
which uses an AF relay.   \par
 In UAV-aided relay networks,
UAVs are deployed to extend coverage and provide wireless connectivity for distant users without reliable direct
communication links. In \cite{choi2014energy}, an algorithm for energy efficiency maximization was proposed in which aerial relay moves in a circular
trajectory. The authors in \cite{zeng2016throughput} considered a novel mobile relaying
technique, where the relay nodes are mounted on UAVs. They studied the throughput
maximization problem in mobile relaying systems by optimizing
the source/relay transmit power along with the relay trajectory.
In \cite{zhan2011wireless}, the authors studied a network that consists of a BS and several aerial relays which serve several ground
users. The deployment of UAVs for intercellular traffic offloading was studied in \cite{rohde2013ad}. In \cite{zhang2018joint}, the authors considered a UAV relay network, where the UAV works as an
AF relay. In \cite{xu-joint}, a UAV was employed as a relay to improve fairness and energy efficiency.  In \cite{access-af-aerial}, the authors considered maximizing the throughput of a mobile relay system employing power allocation and UAV trajectory planning. \par
There are a few studies that exploit NOMA for a UAV-mounted BS to
serve terrestrial users \cite{Nallanathan-noma-uav}.  
In \cite{sharma2017uav}, the authors deployed a fixed-wing type UAV which moves in a circular
trajectory around the centre of a macro-cell to provide
coverage to the ground users with NOMA scheme. In \cite{sohail2018non}, the authors proposed a power allocation scheme to
maximize the sum-rate of the NOMA system
by reducing energy expense for the UAV. Authors in \cite{poor-uav-noma}, considered
a multi-user system, in which a single-antenna
UAV-BS serves a large number of ground users by employing NOMA.  In \cite{f.r.yu}, the placement and
power allocation were jointly optimized to improve the
performance of the NOMA-UAV network. In \cite{alo-precoding},
a UAV and BS cooperate to serve
ground users simultaneously. In this paper, the sum rate was maximized by
jointly optimizing the UAV trajectory and the NOMA precoding. \par
Vehicular communication has attracted much attention in recent years. In \cite{d2d-v}, the authors investigated the spectrum sharing and
power allocation design of device-to-device (D2D)-enabled vehicular networks. In \cite{cv2x-noma}, two NOMA-based relay-assisted broadcasting
and multicasting schemes were proposed for 
C-V2X communications.
In \cite{NOMA-Enabled-V2X}, the resource allocation problem for NOMA-enabled V2X communications
was investigated. The authors in \cite{cache-v2x} introduced cache-aided NOMA as an enabling technology for
vehicular networks. 

\subsection{Motivations and Contributions}
To the best of our knowledge, there are no works in the related literature that have considered aerial communication for vehicular networks. In this paper, we consider a vehicular network in which a ground BS serves disconnected terrestrial vehicles with the aid of an aerial relay. 
Due to mitigating resource collision, NOMA reduces latency and therefore is a good choice for vehicular communication. Note that applying NOMA is straightforward for air-to-ground channels rather than for ground-to-ground ones. Indeed, multipath fading is not dominant in air-to-ground links and there is less randomness in these links.  Hence, in order to apply NOMA scheme, we just need to sort the vehicles based on their distances.\par
   In this paper, we show that in a two-user network with an AF relay, NOMA always has better or equal sum-rate performance in comparison to OMA at high SNR regime. However, for the cases where i) BS-to-relay link is weak, or ii) two users have similar links, or iii) BS-to-relay link is similar to relay-to-weak user link, applying NOMA has negligible sum-rate gain. Hence, due to the complexity of successive-interference-cancellation (SIC) decoding in NOMA, we propose a dynamic NOMA/OMA scheme in which OMA is selected for transmission when applying NOMA has negligible gain. 
   In the proposed scheme, both vehicles can apply SIC based on the quality of the channel between these vehicles and the relay node.  Also, we show that OMA always has better min-rate performance than NOMA at high SNR regime.\par
We consider two scenarios for this network. In the first scenario, due to high capacity requirements of V2I links, we formulate an optimization problem which maximizes the sum-rate of two vehicles at all time slots satisfying the required rates of each vehicle at each time slot. This scenario is suitable for delay-tolerant cases, where for example both vehicles are downloading a video. We find optimal values for UAV trajectory, transmit powers of NOMA vehicles at the BS, and transmit power of the relay node to maximize this sum-rate. In the second scenario, in order to provide more uniform and fair rate performance between two users and among all time slots, we optimize the minimum rate of users at each time slot. This scenario is for delay-constrained applications such as safety-critical services. These optimization problems are non-convex and intractable to solve. In order to solve the above-mentioned non-convex problems, we apply the alternating optimization (AO) method. Hence, we divide our optimization problem into two separate sub-problems. In the first sub-problem, with a given trajectory planning, we optimize the transmit power of vehicles and the relay. With some manipulations, this problem is a difference of concave (DC) programming problem. In the second sub-problem, the trajectory of the UAV is optimized for given power allocations. Both of these sub-problems are still non-convex, and hence we apply the successive convex approximation (SCA) method to solve them. \par
Our system model can be generalized into a multi-user system with $K$ users. We can serve these disconnected vehicles by two methods. At the first method, we assign one UAV for each pair of NOMA users. Hence, we require $K/2$ UAVs to serve $K$ users, and each of these UAVs must work in different sub-carriers. At the second method, we assume that we have only one UAV for all of the $K$ users. In order to have less decoding complexity, these $K$ users are divided into $K/2$ groups with two users inside each group, and then NOMA scheme is applied for each pair. These pairs are distinguished by different sub-carriers. Note that in order to generalize our formulated optimization problems and their solutions for multi-user case, one can define user paring coefficients for NOMA scheme to pair vehicles. 
The problem of user pairing for NOMA scheme has been investigated well in the literature \cite{pairing} and is beyond the scope of this paper.\par
The main contributions of this paper are summarized as follows:
\begin{itemize}
  \item We show that in a two-user network with a dedicated AF relay, NOMA always has better or equal sum-rate performance in comparison to OMA at high SNR regime.  Also, we show that OMA always has better min-rate performance than NOMA at high SNR regime. 
   \item We show that for the cases where i) BS-to-relay link is weak, or ii) two users have similar links, or iii) BS-to-relay link is similar to relay-to-weak user link, applying NOMA has negligible sum-rate gain in comparison to OMA. Hence, due to the complexity of SIC decoding at NOMA, we propose a dynamic NOMA/OMA scheme in which OMA is selected for transmission when applying NOMA has only negligible gain. 
  \item We formulate two optimization problems in which the sum-rate and min-rate of two vehicles are maximized. The formulated problems are non-convex. Hence, using the AO method we divide the original problem into two separate sub-problems for optimizing trajectory and power allocations. These sub-problems are still non-convex, and hence we solve them via the SCA  and DC methods. The proposed efficient method converges at few iterations.

\end{itemize}

\subsection{Organization}
The remainder of this paper is organized as follows. Section II presents the system model. Section III presents the proposed dynamic NOMA/OMA. Sections IV and V provide the formulated problems for the sum-rate and min-rate problems. Section VI provides simulation
results to validate the performance of the proposed algorithms. Finally, Section VII concludes the paper.

%
%
%
%




\section{SYSTEM MODEL}
The considered system model consists of a BS, a UAV-mounted relay, and two disconnected NOMA vehicles as depicted in Fig. \ref{fig:model}. The vehicles receive different information (e.g., vehicle-specific control
information) from the BS.  We assume that the locations of vehicles are known a priory for the BS. Note that because vehicles are typically moving in a straight line at a street, this assumption is reasonable. Also, for some future applications, like autonomous driving, a central unit like a BS is responsible to control and drive vehicles, and hence it is aware of locations of vehicles. Due to high computational capabilities of BS rather than other nodes in our system model, we assume that the BS performs optimization algorithms.
Due to existing obstacles, blocking buildings, and significant path loss, there is no direct link between the BS and the vehicles, and the UAV helps to extend coverage and establish a communication link for these disconnected vehicles. Note that we assume that vehicles can not establish communication link with other BSs, and therefore there is no inter-cell interference from other cells in our system model.  We assume that the relay node applies the AF protocol and works in half-duplex mode.  In AF protocol, the relay only amplifies and retransmits the received signal, and so the complexity of computations is reduced \cite{patel2006statistical}. Note that in \cite{zeng2016throughput} the relay node buffers messages until a good communication link is established. The scheme in \cite{zeng2016throughput} adds an extra buffering delay to the system which is not desirable for ultra reliable low latency communication (URLLC). However, in this paper, we assume that at each time the relay node retransmits the received signal which imposes less delay in the transmission of data. All of the nodes are assumed to be equipped with a single antenna. \par
\begin{figure}[t]
\centering
  \includegraphics[width=\linewidth]{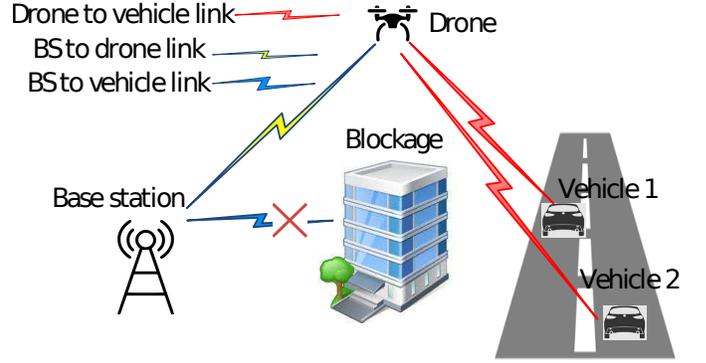}
  \caption{System model for drone-assisted NR-V2X network.}
  \label{fig:model}
\end{figure}
Hereafter, we show the BS, relay, vehicle 1, and vehicle 2 with subscripts s, r, 1, and 2, respectively. In our system, the BS is placed on the origin, and its coordinates is denoted by $(x_{BS},y_{BS},h_{BS})=(0, 0, 0)$. Also, the coordinates of relay node, vehicle 1, and vehicle 2 are denoted by $(x[n], y[n], h[n])$, $(x_{1}[n], y_{1}[n], 0)$ and $(x_{2}[n], y_{2}[n], 0)$, respectively, where $n$ indicates the number of time slots. Moreover, we assume that the UAV flies at a fixed height $h[n]=h$ \cite{zeng2016throughput}. We assume that there are predefined initial and final locations for the UAV flight path.  Hence, the constraints $(x[0], y[0])=(x_{s}, y_{s})$, and $(x[N+1], y[N+1])=(x_{f}, y_{f})$ must be applied in the optimization problems where $N$ is the total number of time slots.
We assume that the flight time $T$ for the UAV is divided into $N$ slots, such that $T=N\tau$, and $\tau$ is supposed to be small enough. Therefore, during each time slot, the position of the relay node is fixed \cite{zeng2016throughput}. 
In this paper, we assume that during flight time $T$, the UAV flies from position $(x_{s}, y_{s},h)$  to position $(x_{f}, y_{f},h)$. Also, the UAV can fly with the maximum speed denoted by $V$. Hence, during each time slot, the relay moves on the basis of the velocity constraint \cite{zeng2016throughput} as
$(x[n]-x[n-1])^{2}+(y[n]-y[n-1])^{2}\leq(V\tau)^{2},~~\forall n$, which shows that due to the existence of a maximum velocity limitation, maximum displacement of the UAV must be less than $V\tau$ at each time slot $n$.

The channel power gains of the BS to the relay, the relay to vehicle 1, and the relay to vehicle 2 are indicated by $h_{r}[n]$, $h_{1}[n]$, and $h_{2}[n]$, respectively.  We assume that there is no small scale fading for the air-to-ground channels. We suppose that these channels are dominated by LOS links, and hence multipath fading can be ignored \cite{zeng2016throughput}. Hence, the channel power $h_{i}[n]$ follows the free space path loss model as
\begin{equation}\label{channel-model}
h_{i}[n]=\beta_{0}d_{i}^{-2}[n]=\dfrac{\beta_{0}}{(x[n]-x_i[n])^{2}+(y[n]-y_{i}[n])^{2}+h^{2}},
\end{equation}
  for $n=1,...,N,~i=1,2$, where $\beta_{0}$ denotes the channel power at the reference distance $d_{0}=1~\mathrm{m}$. We consider independent additive white gaussian noise (AWGN) $z_{k}[n]$ with the distribution $CN(0,\sigma_{k}^{2}[n])$ ($k \in\{r, 1, 2\},\forall n$) in which $\sigma_{k}^{2}[n]=\sigma^{2}$ shows the variance of the noise for node $k$ at time slot $n$ \footnote{Note that for thermal noise at receiver we have $N_0=KT$, in which $K$ is Boltzmann's constant, and $T$ is the receiver system temperature in kelvins. Because the UAV flies at the height of $h=100~ \mathrm{m}$, small changes of receiver's temperature between $h=0~ \mathrm{m}$ and $h=100 ~\mathrm{m}$ have a small impact on $N_0$. Hence, we assume same $N_0$ for aerial and terrestrial nodes.}. $\bar{P}_{s}$ and $\bar{P}_{r}$ indicate the average transmit power at each time slot at the source and relay nodes, respectively. During $N$ time slots, the total transmit energy for the source and relay nodes must be less than $E_s=N\bar{P}_{s}$ and $E_r=N\bar{P}_{r}$, respectively \cite{zeng2016throughput}. Hence, the BS and UAV can consume different powers at different time slots conditioned upon their energy consumption constraint.
  Finally, we assume that the relay node operates
in frequency division duplexing (FDD) mode with equal bandwidth allocated for information reception from the BS and transmission to the vehicles.

\section{DYNAMIC NOMA/OMA SCHEME}
In the NOMA scheme, the BS sends the combination of vehicle messages to the relay node based on superposition coding (SC) as $s_3[n]=\sqrt{ p_{1}[n]} s_{1}[n]+\sqrt{p_{2}[n]} s_{2}[n]$,
where $p_{1}[n]$ and $p_{2}[n]$ are power allocation coefficients at time slot $n$ for each vehicle, and we call them NOMA coefficients. $s_{1}[n]$ and $s_{2}[n]$ are the messages of vehicle 1 and vehicle 2, respectively. We assume that $E\{|s_{1}[n]|^{2}\}=E\{|s_{2}[n]|^{2}\}=1$. In NOMA, less power is allocated to the strong vehicle, and more power is allocated to the weak vehicle \cite{saito2013non}. Then we can perform SIC to remove far vehicle interference from the received signal at the strong vehicle. 
Note that due to assuming the free space path loss model for air to ground channels, we only need to sort NOMA vehicles based on their distances. 
As we can see in Fig. \ref{fig:model}, based on the trajectory of the UAV and the locations of vehicles, vehicle 1 (vehicle 2) can be near vehicle in some time slots. Hence, when vehicle 1 (vehicle 2) is the near vehicle, we apply SIC at vehicle 1 (vehicle 2) to remove the interference of vehicle 2 (vehicle 1). The received signal at the relay node at time slot $n$ is given by
$y_{r}[n]=\sqrt{h_{r}[n]p_{1}[n]}s_{1}[n]+\sqrt{h_{r}[n]p_{2}[n]}s_{2}[n]+z_{r}[n]$.
 The relay works in the AF mode in our system and  amplifies the received signal at each time slot with amplification gain $\rho[n]$. According to \cite{patel2006statistical}, and assuming that  we allocate power $p_{r}[n]$ for the UAV, we can write the amplification gain as $\rho[n]=\frac{p_{r}[n]}{(p_{1}[n]+p_{2}[n])h_{r}[n]+\sigma^{2}}$. Then, the relay transmits the amplified signal $\sqrt{\rho[n]}y_{r}[n]$ to the vehicles. The received signal at each vehicle is given by
\begin{equation} \label{ykn}
\begin{split}
y_{k}[n]&=\sqrt{\rho[n]h_{rk}[n]}y_{r}[n]+z_{k}[n]\\&=\sqrt{\rho[n]h_{rk}[n]h_{r}[n]p_{1}[n]}s_{1}[n]\\&+\sqrt{\rho[n]h_{rk}[n]h_{r}[n]p_{2}[n]}s_{2}[n]\\&+\sqrt{\rho[n]h_{rk}[n]}z_{r}[n]+z_{k}[n], ~~~k=1,2.
\end{split}
\end{equation}

\begin{figure}
    \centering
  \includegraphics[width=\linewidth]{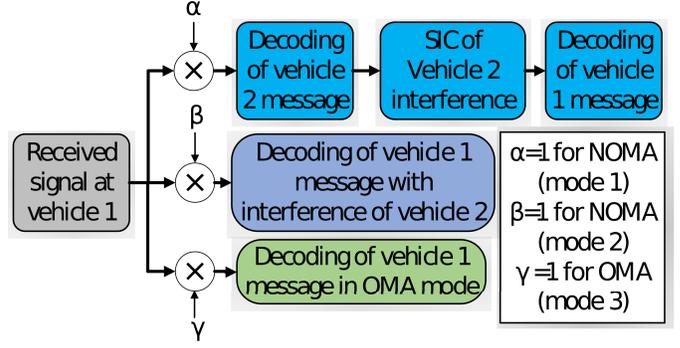}
  \caption{Decoding structure at vehicle 1's receiver with proposed dynamic NOMA/OMA.}
  \label{fig:dynamic}
\end{figure}

One can see from (\ref{ykn}) that for the case where vehicle 1 (vehicle 2) is the near vehicle, in order to perform SIC, we have to allocate more power for vehicle 2 (vehicle 1).  Note that due to the mobility of the vehicles and the aerial relay in our system, the channel power gain between nodes changes a lot. Hence, in contrast to the scenarios with fixed users in which the role of the strong user is given to the user closest to the BS, the roles of strong and weak users alternate between the two vehicles in different time slots for our system. This means that the vehicle which must perform SIC alternates in different time slots, and consequently, the complexity of decoding is divided between vehicles. Also, we will prove that in some cases NOMA is not superior to OMA. Applying OMA in these cases reduces the complexity of performing SIC at the vehicles in the cost of a minor decrease in throughput. Based on these observations, we propose a dynamic NOMA/OMA scheme with the following three modes for transmission of vehicle messages: SIC at vehicle 1, SIC at vehicle 2, and OMA which are indicated by mode 1 ($m=1$), mode 2 ($m=2$), and mode 3 ($m=3$), respectively. The structure of the receiver's decoder at vehicle 1 for the proposed dynamic NOMA/OMA scheme has been shown in Fig. \ref{fig:dynamic}. Note that the same structure can be depicted for the decoder of vehicle 2.
  \par 
\subsection{Mode 1: SIC at Vehicle 1}

 In this case, vehicle 1 is the near vehicle and performs SIC to remove the interference of vehicle 2. Hereafter, we name the multiple access scheme of this case `mode 1' ($m=1$). At vehicle 1, message of vehicle 2 is decoded first. Then, this decoded message is subtracted from the received signal at vehicle 1. Finally, vehicle 1 decodes its own message without any interference. Hence, the effective SINR of the vehicle 2 message observed at vehicle 1 equals
$\gamma_{1,2}[n]=\frac{\rho[n]h_{1}[n]h_{r}[n]p_{2}[n]}{\rho[n]h_{1}[n]h_{r}[n]p_{1}[n]+\rho[n]h_{1}[n]\sigma^{2}+\sigma^{2}}$, and the achievable rate of decoding the vehicle 2 message at vehicle 1 equals $R_{1,2}[n]=\log_{2}(1+\gamma_{1,2}[n])$. Then, the achievable rate of vehicle 1 equals
\begin{equation}
\begin{split}
\label{r1-mode1}
R_{1}^{1}[n]&=\log_{2}(1+\\&\frac{p_{r}[n]h_{1}[n]h_{r}[n]p_{1}[n]}{p_{r}[n]h_{1}[n]\sigma^{2}+(p_{1}[n]+p_{2}[n])h_{r}[n]\sigma^{2}+\sigma^{4}}),
\end{split}
\end{equation}
in which $R_{k}^{m}[n]$ indicates the achievable rate of vehicle $k$ at mode $m$. From (\ref{ykn}), the achievable rate of the vehicle 2 message at vehicle 2 equals $R_{2,2}[n]=\log_{2}(1+\frac{\rho[n]h_{2}[n]h_{r}[n]p_{2}[n]}{\rho[n]h_{2}[n]h_{r}[n]p_{1}[n]+\rho[n]h_{2}[n]\sigma^{2}+\sigma^{2}})$. On the other hand, the vehicle 2 message must be decoded at vehicle 1 in order to perform SIC. Hence, the achievable rate of vehicle 2 equals $R_{2}[n]=\min(R_{1,2}[n],R_{2,2}[n])$.
In this mode, we suppose that vehicle 1 is closer to the UAV in comparison to vehicle 2. Hence, we write $h_{1}[n]>h_{2}[n]$. Therefore, $R_{1,2}[n]>R_{2,2}[n]$, and we can write \eqref{r2-mode1} at the top of the next page.

\begin{figure*}[!t]
\normalsize
\begin{align}\label{r2-mode1}
    R_{2}^{1}[n]=\log_{2}(1+\frac{p_{r}[n]h_{2}[n]h_{r}[n]p_{2}[n]}{p_{r}[n]h_{2}[n]h_{r}[n]p_{1}[n]+p_{r}[n]h_{2}[n]\sigma^{2}+(p_{1}[n]+p_{2}[n])h_{r}[n]\sigma^{2}+\sigma^{4}}).
\end{align}
\hrulefill \vspace*{0pt}
\end{figure*}

\subsection{Mode 2: SIC at Vehicle 2}

In this mode, vehicle 2 is the near vehicle, and performs SIC to remove the interference of vehicle 1. The formulation of achievable rates for both vehicles in this mode is similar to the previous case except that we perform SIC at vehicle 2. Hence, we only write the achievable rates of vehicles in this section. The achievable rate of vehicle 2 and vehicle 1 are given by
\begin{equation}
\label{r2-mode2}
\begin{split}
R^{2}_{2}[n]&=\log_{2}(1+\\&\frac{p_{r}[n]h_{2}[n]h_{r}[n]p_{2}[n]}{p_{r}[n]h_{2}[n]\sigma^{2}+(p_{1}[n]+p_{2}[n])h_{r}[n]\sigma^{2}+\sigma^{4}}),
\end{split}
\end{equation}
and \eqref{r1-mode2} at the top of the next page. 
\begin{figure*}[!t]
\normalsize
\begin{align}\label{r1-mode2}
    R_{1}^{2}[n]=\log_{2}(1+\frac{p_{r}[n]h_{1}[n]h_{r}[n]p_{1}[n]}{p_{r}[n]h_{1}[n]h_{r}[n]p_{2}[n]+p_{r}[n]h_{1}[n]\sigma^{2}+(p_{1}[n]+p_{2}[n])h_{r}[n]\sigma^{2}+\sigma^{4}}).
\end{align}
\hrulefill \vspace*{0pt}
\end{figure*}

\subsection{Mode 3: OMA Scheme}

    We apply frequency division multiple access (FDMA) for two vehicles in this mode.  We assume that the bandwidth of each time slot $n$ is equally divided between two vehicles at both the BS and relay nodes and that the vehicles send their messages in an orthogonal manner. Hereafter, we name this case `mode 3' (m=3). We assume that different transmit powers $p_{1}[n]$ and $p_{2}[n]$ at the BS are allocated to vehicle 1 and vehicle 2, respectively. The BS sends $\sqrt{ p_{1}[n]}\ s_{1}[n]$, and $\sqrt{p_{2}[n]}\ s_{2}[n]$, at two orthogonal frequency bands of each time slot. The received signals at the relay node equals $y_{r}^k[n]=\sqrt{h_{r}[n]p_{k}[n]}s_{k}[n]+z_{r}[n]$ for $k=1,2$. Note that due to dividing frequency bands for the two vehicles, the variance of the noise $z_{r}[n]$ is half of the NOMA case. Hence, we assume that this noise has the same variance $\sigma^{2}_{O}=0.5\sigma^{2}$ in all time slots. At the  relay node, we assume that the power of relay $p_{r}[n]$ is equally divided between two vehicles. Therefore, we can write the amplification gain of each vehicle message as 
$\rho_{k}[n]=\frac{0.5p_{r}[n]}{p_{k}[n]h_{r}[n]+\sigma^{2}_{O}}$ for $k=1,2$.
Then, the relay transmits the amplified signal $\sqrt{\rho_{k}[n]}y_{r}^k[n]$ to the vehicles. The received signal at each vehicle is given by
\begin{equation} \label{ykn_OMA}
\begin{split}
y_{k}[n]&=\sqrt{\rho_{k}[n]h_{k}[n]}y_{r}^k[n]+z_{k}[n]\\&=\sqrt{\rho_{k}[n]h_{k}[n]}(\sqrt{h_{r}[n]p_{k}[n]}s_{k}[n]+z_{r}[n])+z_{k}[n]\\
&=\sqrt{\rho_{k}[n]h_{k}[n]h_{r}[n]p_{k}[n]}s_{k}[n]\\&+\sqrt{\rho_{k}[n]h_{k}[n]}z_{r}[n]+z_{k}[n], ~~~k=1,~2.
\end{split}
\end{equation}
 We assume that $z_{k}[n]$ has the same variance $\sigma^{2}_{O}=0.5\sigma^{2}$ at each vehicle. From (\ref{ykn_OMA}), the effective SINR of each vehicle equals $\gamma_{k}[n]=\frac{\rho_{k}[n]h_{k}[n]h_{r}[n]p_{k}[n]}{\rho_{k}[n]h_{k}[n]\sigma^{2}_{O}+\sigma^{2}_{O}}$,
and the rates of vehicles are given by
\begin{equation}
\label{rate-oma}
\begin{split}
R_{k}^{3}[n]&=\frac{1}{2}\log_{2}(1+\frac{p_{r}[n]h_{k}[n]h_{r}[n]p_{k}[n]}{p_{r}[n]h_{k}[n]\sigma^{2}_{O}+2p_{k}[n]h_{r}[n]\sigma^{2}_{O}+2\sigma^{4}_{O}}),
\end{split}
\end{equation}
for $k=1,2$.
\subsection{Proposed Dynamic NOMA/OMA Scheme}
In this section, we first introduce two propositions which compare the sum-rate and min-rate performance of NOMA and OMA at the high SNR regime. We assume that $R_{i,j}^{\infty}$ shows the approximated value for rate at the high SNR regime, where $i\in\{S,M\}$ differentiates sum-rate and min-rate, and $j\in\{O,N\}$ differentiates OMA and NOMA schemes.
\begin{figure}[t]
\begin{subfigure}{0.5\textwidth}
\includegraphics[width=\textwidth]{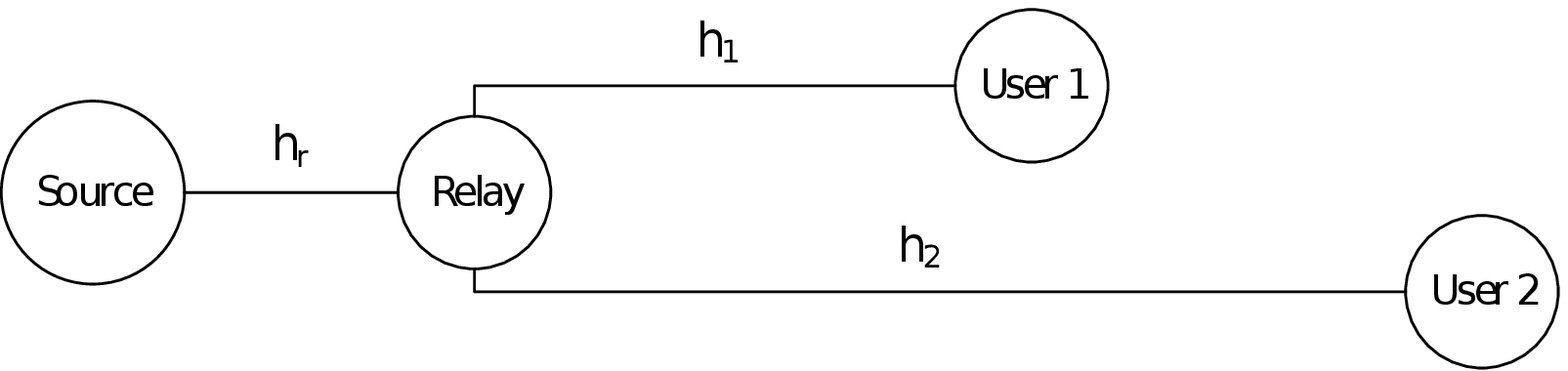} 
\caption{Case 1: $h_{r}>h_{1}>h_{2}$}
\label{fig:case1}
\end{subfigure}
\begin{subfigure}{0.5\textwidth}
\includegraphics[width=\textwidth]{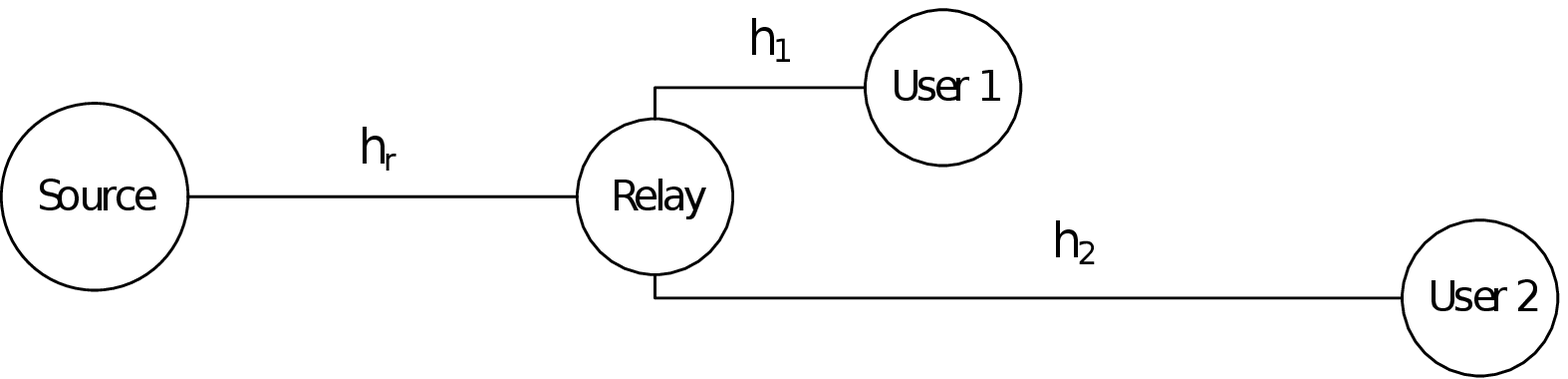}
\caption{Case 2: $h_{1}>h_{r}>h_{2}$}
\label{fig:case2}
\end{subfigure}
\caption{Two cases where NOMA has better sum-rate than OMA (refer to Proposition \ref{proposition-sum}).}
\label{fig:cases12}
\end{figure}
\begin{proposition}\label{proposition-sum}
In a two-user network with a dedicated AF relay and in the absence of a direct link between the BS and two users, NOMA always has better or equal sum-rate performance in comparison to OMA at high SNR regime.
\end{proposition}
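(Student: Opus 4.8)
The plan is to carry out the whole comparison in the high-SNR regime, i.e.\ let $\sigma^{2}\to 0$ with the powers held fixed (equivalently, scale the powers up). In that regime both the NOMA sum-rate and the OMA sum-rate collapse to simple closed forms, and the claim $R_{S,N}^{\infty}\ge R_{S,O}^{\infty}$ reduces to a single algebraic inequality. Assume without loss of generality $h_{1}[n]\ge h_{2}[n]$ (otherwise swap the vehicle labels), so that NOMA operates in mode~1 and its sum-rate is $R_{1}^{1}[n]+R_{2}^{1}[n]$ from \eqref{r1-mode1} and \eqref{r2-mode1}; I suppress the index $n$ from here on.

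First I would evaluate $R_{S,N}^{\infty}$. The SINR inside $R_{1}^{1}$ grows without bound, like $1/\sigma^{2}$ (its $\sigma^{4}$ term being negligible), whereas the SINR inside $R_{2}^{1}$ converges to the finite value $p_{2}/p_{1}$, because its denominator is dominated by the interference term $p_{r}h_{2}h_{r}p_{1}$, which carries no $\sigma^{2}$. Hence $R_{1}^{1}\to\log_{2}\!\big(p_{r}h_{1}h_{r}p_{1}/[\sigma^{2}(p_{r}h_{1}+(p_{1}+p_{2})h_{r})]\big)$ and $R_{2}^{1}\to\log_{2}(1+p_{2}/p_{1})$, so
\[
R_{S,N}^{\infty}=\log_{2}\!\left(\frac{p_{r}h_{1}h_{r}(p_{1}+p_{2})}{\sigma^{2}\big(p_{r}h_{1}+(p_{1}+p_{2})h_{r}\big)}\right),
\]
which is exactly the rate of a single point-to-point AF link from the BS (carrying the full power $p_{1}+p_{2}$) to the \emph{strong} vehicle via the relay, the bottleneck being the weaker of the two hops --- the analytic image of the fact that under SIC the strong vehicle decodes both messages. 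The same manipulation on \eqref{rate-oma} with $\sigma_{O}^{2}=\sigma^{2}/2$ gives $R_{S,O}^{\infty}=\tfrac12\sum_{k=1,2}\log_{2}\!\big(2p_{r}h_{k}h_{r}p_{k}/[\sigma^{2}(p_{r}h_{k}+2p_{k}h_{r})]\big)$.

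It then remains to show $R_{S,N}^{\infty}\ge R_{S,O}^{\infty}$. Exponentiating base $2$, cross-multiplying, and substituting $a=p_{r}h_{1}$, $b=p_{r}h_{2}$, $u=p_{1}h_{r}$, $v=p_{2}h_{r}$ (all positive, with $a\ge b$ because $h_{1}\ge h_{2}$), the inequality reduces after cancellation to
\[
a(u+v)^{2}(a+2u)(b+2v)\ \ge\ 4buv\,(a+u+v)^{2}.
\]
I would read the difference $g(a)$ of the two sides as a quadratic in $a$: its leading coefficient $(u+v)^{2}(b+2v)-4buv\ge 4uv(b+2v)-4buv=8uv^{2}>0$ (using $(u+v)^{2}\ge 4uv$) and its constant term $-4buv(u+v)^{2}<0$, so $g$ has a single positive root $a_{0}$ and $g(a)\ge 0$ iff $a\ge a_{0}$. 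A short expansion then yields the identity $g(b)=b^{2}(u-v)^{2}\big(b+2(u+v)\big)\ge 0$, which forces $a_{0}\le b$; hence $g(a)\ge 0$ for every $a\ge b$, proving the proposition, with equality precisely when $h_{1}=h_{2}$ and $p_{1}=p_{2}$. (Since the estimate is pointwise in the powers, it also covers the variant in which each scheme optimizes its own power allocation.)

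The only real obstacle is bookkeeping rather than ideas: (i) tracking which terms survive $\sigma^{2}\to0$ --- in particular that the \emph{interference} term, not a noise term, dominates the denominator of $R_{2}^{1}$, so $R_{2}^{1}$ stays bounded while $R_{1}^{1}$ diverges; and (ii) choosing the right pivot for the final inequality, since the crude relaxation ``replace $(p_{1}+p_{2})^{2}$ by $4p_{1}p_{2}$ and $h_{1}$ by $h_{2}$'' overshoots and produces a false statement. One must instead keep the explicit $a$-dependence and use the quadratic-in-$a$ argument above; the case split $h_{r}>h_{1}>h_{2}$, $h_{1}>h_{r}>h_{2}$, $h_{1}>h_{2}>h_{r}$ suggested by Fig.~\ref{fig:cases12} is an alternative route but a considerably messier one.
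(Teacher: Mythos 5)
Your proof is correct, and the algebra checks out: with $a=p_rh_1$, $b=p_rh_2$, $u=p_1h_r$, $v=p_2h_r$, the difference $g(a)$ of the two sides of your final inequality is an upward parabola in $a$ with negative constant term, and the identity $g(b)=b^{2}(u-v)^{2}\bigl(b+2(u+v)\bigr)\ge 0$ indeed forces its unique positive root to lie at or below $b\le a$. However, you take a genuinely different route from the paper. The paper normalizes $p_1+p_2=p_r=1$, approximates the end-to-end AF rate $\log_2\bigl(h_1^{\infty}h_r^{\infty}/(h_1^{\infty}+h_r^{\infty})\bigr)$ by $\log_2\min(h_r^{\infty},h_1^{\infty})$, fixes $p_1=p_2=1/2$ on the OMA side, and then splits into the three orderings of $(h_r,h_1,h_2)$ illustrated in Fig.~\ref{fig:cases12}, exhibiting in each case an explicit nonnegative gap such as $\tfrac12\log_2(h_1^{\infty}/h_2^{\infty})$ or $\tfrac12\log_2(h_r^{\infty}/h_2^{\infty})$. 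You instead keep the exact harmonic-mean asymptotics and arbitrary power splits and reduce the whole claim to the single inequality $a(u+v)^{2}(a+2u)(b+2v)\ge 4buv(a+u+v)^{2}$, settled by the quadratic-in-$a$ pivot. What your route buys is rigour and generality: the $\min$ approximation can be off by up to one bit per hop, whereas your inequality is exact in the $\sigma^{2}\to 0$ limit, holds pointwise in $(p_1,p_2,p_r)$ (hence also when each scheme optimizes its own powers), avoids any case split, and identifies the equality case $h_1=h_2$, $p_1=p_2$. What the paper's route buys is the explicit closed-form gap expressions per channel ordering, which are precisely what feed Remark~\ref{remark1} and the switching thresholds of Table~\ref{table:1}; your argument establishes the sign of the gap but does not directly deliver those simple formulas.
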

\begin{proof}
We use the same notations as in the system model section except that we take the transmit SNRs at the BS and relay nodes (i.e., $\rho=\frac{\bar{P}_{s}}{\sigma^{2}}=\frac{\bar{P}_{r}}{\sigma^{2}}=\frac{P}{\sigma^{2}}$) into channel power gains. Hence, new channel power gains are $h_{i}^{\infty}=\rho h_{i}$ ($i\in\{r,1,2\}$) where the superscript $\infty$ indicates the high SNR regime. Also, with this new notation for channel power gain, we have $p_{1}+p_{2}=1$ and $p_{r}=1$.  Assuming that $h_{1}>h_{2}$, we have to perform SIC at user 1. Hence, utilizing the achievable rates of users at mode 1 from (\ref{r1-mode1}) and (\ref{r2-mode1}) we have
\begin{equation}
\label{noma-sumrate}
\begin{split}
R_{S,N}^{\infty}&=R_{1}^{1,\infty}+R_{2}^{1,\infty}
=\log_{2}(1+\frac{p_{r}h_{1}^{\infty}h_{r}^{\infty}p_{1}}{p_{r}h_{1}^{\infty}+(p_{1}+p_{2})h_{r}^{\infty}+1})\\&+\log_{2}(1+\frac{p_{r}h_{2}^{\infty}h_{r}^{\infty}p_{2}}{p_{r}h_{2}^{\infty}h_{r}^{\infty}p_{1}+p_{r}h_{2}^{\infty}+(p_{1}+p_{2})h_{r}^{\infty}+1})\\&\approx\log_{2}(\frac{h_{1}^{\infty}h_{r}^{\infty}}{h_{1}^{\infty}+h_{r}^{\infty}})+\log_{2}(p_1)+\log_{2}(1+\frac{p_{2}}{p_{1}})\\&=\log_{2}(\frac{h_{1}^{\infty}h_{r}^{\infty}}{h_{1}^{\infty}+h_{r}^{\infty}})\approx\log_2\bigg(\min(h_{r}^{\infty},h_{1}^{\infty})\bigg).
\end{split}
\end{equation}
On the other hand, for the OMA scheme, we use the rates of the users at mode 3 from (\ref{rate-oma}) as
\begin{equation}
\label{oma-sumrate}
\begin{split}
R_{S,O}^{\infty}&=\sum_{k=1}^2R_{k}^{3,\infty}=\sum_{k=1}^2\frac{1}{2}\log_{2}(1+\frac{2p_{r}h_{k}^{\infty}h_{r}^{\infty}p_{k}}{p_{r}h_{k}^{\infty}+2p_{k}h_{r}^{\infty}+1})
\\&\approx\sum_{k=1}^2\frac{1}{2}\log_2\bigg(\min(h_{r}^{\infty},h_{k}^{\infty})\bigg),
\end{split}
\end{equation}
where we considered $p_1=p_2=0.5$ in (\ref{oma-sumrate}). We can see from (\ref{noma-sumrate}) and (\ref{oma-sumrate}) that with $h_{1}^{\infty}>h_{2}^{\infty}$,  NOMA always has equal or better sum-rate performance in comparison to the OMA scheme. Actually, based on the channel power gains of links, three cases may occur for sum-rate:\\
\textbf{Case 1}: When we have  $h_{r}^{\infty}>h_{1}^{\infty}>h_{2}^{\infty}$ (Fig. \ref{fig:case1}), the sum-rates of NOMA and OMA are $\log_2(h_{1}^{\infty})$ and $\frac{1}{2}\log_2(h_{1}^{\infty})+\frac{1}{2}\log_2(h_{2}^{\infty})$, respectively.  NOMA  has better performance than OMA in this case as
    $R_{S,N}^{\infty}-R_{S,O}^{\infty}\approx\log_2(h_{1}^{\infty})-\frac{1}{2}\log_2(h_{1}^{\infty})-\frac{1}{2}\log_2(h_{2}^{\infty})=\frac{1}{2}\log_2(\frac{h_{1}^{\infty}}{h_{2}^{\infty}})$.
Hence, by increasing the difference between  $h_{1}^{\infty}$ and $h_{2}^{\infty}$, the superiority of NOMA increases.\\
\textbf{Case 2}: When we have $h_{1}^{\infty}>h_{r}^{\infty}>h_{2}^{\infty}$ (Fig. \ref{fig:case2}), the sum-rates of NOMA and OMA are $\log_2(h_{r}^{\infty})$ and $\frac{1}{2}\log_2(h_{r}^{\infty})+\frac{1}{2}\log_2(h_{2}^{\infty})$, respectively. One can see that NOMA has better performance than OMA as $R_{S,N}^{\infty}-R_{S,O}^{\infty}\approx\log_2(h_{r}^{\infty})-\frac{1}{2}\log_2(h_{r}^{\infty})-\frac{1}{2}\log_2(h_{2}^{\infty})=\frac{1}{2}\log_2(\frac{h_{r}^{\infty}}{h_{2}^{\infty}})$.
We see that by increasing the difference between  $h_{r}^{\infty}$ and $h_{2}^{\infty}$, the superiority of NOMA increases.\\
\textbf{Case 3}: When we have  $h_{1}^{\infty}>h_{2}^{\infty}>h_{r}^{\infty}$, the sum-rates of both schemes equals $\log_2(h_{2}^{\infty})$. Hence, NOMA has the same performance as OMA for the case where the BS-to-relay link is worse than the relay-to-user links. Consequently, the proof is completed.
\end{proof}
\begin{remark}\label{remark1}
Proposition \ref{proposition-sum} proves that, for the cases where two users have similar links, or where BS-to-relay link is similar with relay-to-weak user link, applying NOMA has negligible sum-rate gain. Also, when the BS-to-relay link is weak, NOMA and OMA have similar sum-rate. 
\end{remark}
\begin{proposition}\label{proposition-min}
In a two-user network with a dedicated AF relay and in the absence of a direct link between the BS and two users, OMA  has superior minimum-rate performance in comparison to NOMA at the high SNR regime.
\end{proposition}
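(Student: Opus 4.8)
The plan is to mirror the proof of Proposition~\ref{proposition-sum}: adopt the same high-SNR normalization (absorb the transmit SNR $\rho$ into the channel gains, $h_i^\infty=\rho h_i$, with $p_1+p_2=1$ and $p_r=1$), assume without loss of generality that $h_1>h_2$ so that mode~1 (SIC at vehicle~1) is used, and compare the two schemes' min-rates term by term. The case $h_2>h_1$ (mode~2) is identical after swapping the vehicle indices, so it suffices to treat mode~1.

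First I would read off the high-SNR behaviour of the NOMA rates from \eqref{r1-mode1} and \eqref{r2-mode1}. After SIC, vehicle~1's link is interference-free and, exactly as in \eqref{noma-sumrate}, $R_1^{1,\infty}\approx\log_2(p_1)+\log_2\!\big(\min(h_r^\infty,h_1^\infty)\big)$, which diverges as $\rho\to\infty$. The essential observation concerns vehicle~2: it decodes its own message while vehicle~1's message (power $p_1$, carried by the \emph{same} amplified relay link) acts as interference, so the SINR in \eqref{r2-mode1} is $\frac{h_2^\infty h_r^\infty p_2}{h_2^\infty h_r^\infty p_1+h_2^\infty+h_r^\infty+1}$, which tends to $p_2/p_1$ rather than to infinity. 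Hence $R_2^{1,\infty}\to\log_2\!\big(1+p_2/p_1\big)$, a constant that is independent of the channel gains and of $\rho$, and therefore the NOMA min-rate satisfies $R_{M,N}^{\infty}=\min(R_1^{1,\infty},R_2^{1,\infty})\le\log_2\!\big(1+p_2/p_1\big)$, i.e.\ it stays bounded at high SNR.

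Next I would invoke the OMA analysis already carried out for \eqref{oma-sumrate}. Since FDMA places the two vehicles on orthogonal bands, each relay link is interference-free, so $R_k^{3,\infty}\approx\tfrac{1}{2}\log_2\!\big(\min(h_r^\infty,h_k^\infty)\big)$ for $k=1,2$; with $h_1^\infty>h_2^\infty$ vehicle~2 is the bottleneck, giving $R_{M,O}^{\infty}=\tfrac{1}{2}\log_2\!\big(\min(h_r^\infty,h_2^\infty)\big)$, which diverges like $\tfrac{1}{2}\log_2\rho$ as $\rho\to\infty$. Putting the two estimates together, the OMA min-rate grows without bound while the NOMA min-rate is bounded by a constant, so for all sufficiently large $\rho$ we have $R_{M,O}^{\infty}>R_{M,N}^{\infty}$; together with the symmetric mode-2 case, this proves the claim.

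As in Proposition~\ref{proposition-sum}, the main obstacle is not the comparison itself but justifying the high-SNR approximations cleanly: one must check from \eqref{r2-mode1} that $h_2^\infty h_r^\infty p_1$ dominates every other denominator term as $\rho\to\infty$, which implicitly relies on the NOMA split $(p_1,p_2)$ being fixed and bounded away from $0$ (the regime of interest), and likewise confirm the interference-free OMA approximation of \eqref{oma-sumrate}. Once the vehicle-2 SINR saturation is in hand, the remaining boundary cases (for instance $h_r^\infty<h_2^\infty$, or ties among the three gains) are immediate, since $\min(h_r^\infty,h_2^\infty)\to\infty$ in every configuration.
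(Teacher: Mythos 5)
Your proposal is correct and follows essentially the same route as the paper's proof: both identify that under NOMA the weak vehicle's SINR in \eqref{r2-mode1} saturates at roughly $p_2/p_1$ so the NOMA min-rate is bounded by a constant, while the OMA min-rate $\tfrac{1}{2}\log_2\big(\tfrac{h_2^{\infty}h_r^{\infty}}{h_2^{\infty}+h_r^{\infty}}\big)$ from \eqref{oma-minrate} grows without bound, forcing OMA to win at high SNR. The only differences are cosmetic (you write $\log_2(1+p_2/p_1)$ where the paper writes $\log_2(p_2/p_1)$, and you spell out the symmetric mode-2 case).
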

\begin{proof}
With the same notations used in the proof of Proposition \ref{proposition-sum}, the minimum achievable rates of users at high SNRs in mode 1 can be written from (\ref{r1-mode1}), (\ref{r2-mode1}) and (\ref{noma-sumrate})  as
\begin{equation}
\label{noma-minrate}
\begin{split}
R_{M,N}^{\infty}&=\min(R_{1}^{1,\infty},R_{2}^{1,\infty})\\&\approx\min\bigg(\log_{2}(\frac{h_{1}^{\infty}h_{r}^{\infty}}{h_{1}^{\infty}+h_{r}^{\infty}}),\log_{2}(\frac{p_{2}}{p_{1}})\bigg)=\log_2(\frac{p_{2}}{p_{1}}).
\end{split}
\end{equation}
For the OMA scheme we use the rates of the users at mode 3 from (\ref{rate-oma}) and (\ref{oma-sumrate}) as
\begin{equation}
\label{oma-minrate}
\begin{split}
R_{M,O}^{\infty}=\min(R_{1}^{3,\infty},R_{2}^{3,\infty})=R_{2}^{3,\infty}\approx\frac{1}{2}\log_{2}(\frac{h_{2}^{\infty}h_{r}^{\infty}}{h_{2}^{\infty}+h_{r}^{\infty}})
\end{split}
\end{equation}
 We can see from (\ref{noma-minrate}) and (\ref{oma-minrate}) that the OMA scheme always has better min-rate performance in comparison to the NOMA scheme at the high SNR regime. Hence, the proof is completed.
\end{proof}
\begin{remark}
Note that the results in Propositions \ref{proposition-sum} and \ref{proposition-min} are not limited to an aerial AF relay; rather, they are general results for any two-user network with a dedicated AF relay. 
\end{remark}
In Fig. \ref{fig:prop12-approx}, we can see the approximated and exact rates for sum-rate and min-rate of OMA and NOMA schemes at Proposition \ref{proposition-sum} and Proposition \ref{proposition-min}. In Fig. \ref{fig:oma-approx}, we can see that when $P>22 ~\mathrm{dBm}$,  we have $|R_{S,O}^{\infty}-R_{S,O}|\leq 0.2$ and $|R_{M,O}^{\infty}-R_{M,O}|\leq 0.2$ for the approximations (\ref{oma-sumrate}) and (\ref{oma-minrate}) at sum-rate and min-rate of OMA scheme, respectively. Also, we can see in Fig. \ref{fig:noma-approx} that when $P>22 ~\mathrm{dBm}$, we have $|R_{S,N}^{\infty}-R_{S,N}|\leq 0.2$ and $|R_{M,N}^{\infty}-R_{M,N}|\leq 0.2$ for the approximations (\ref{noma-sumrate}) and (\ref{noma-minrate}) at sum-rate and min-rate of NOMA scheme, respectively. Note that for Fig. \ref{fig:prop12-approx}, we have considered default simulation parameters as in Section VI except that $(x_s,y_s)=(250,250)$ and $(y_1,y_2)=(400,500)$.\par
\begin{figure}[t]
\begin{subfigure}{0.5\textwidth}
\includegraphics[ width=\linewidth]{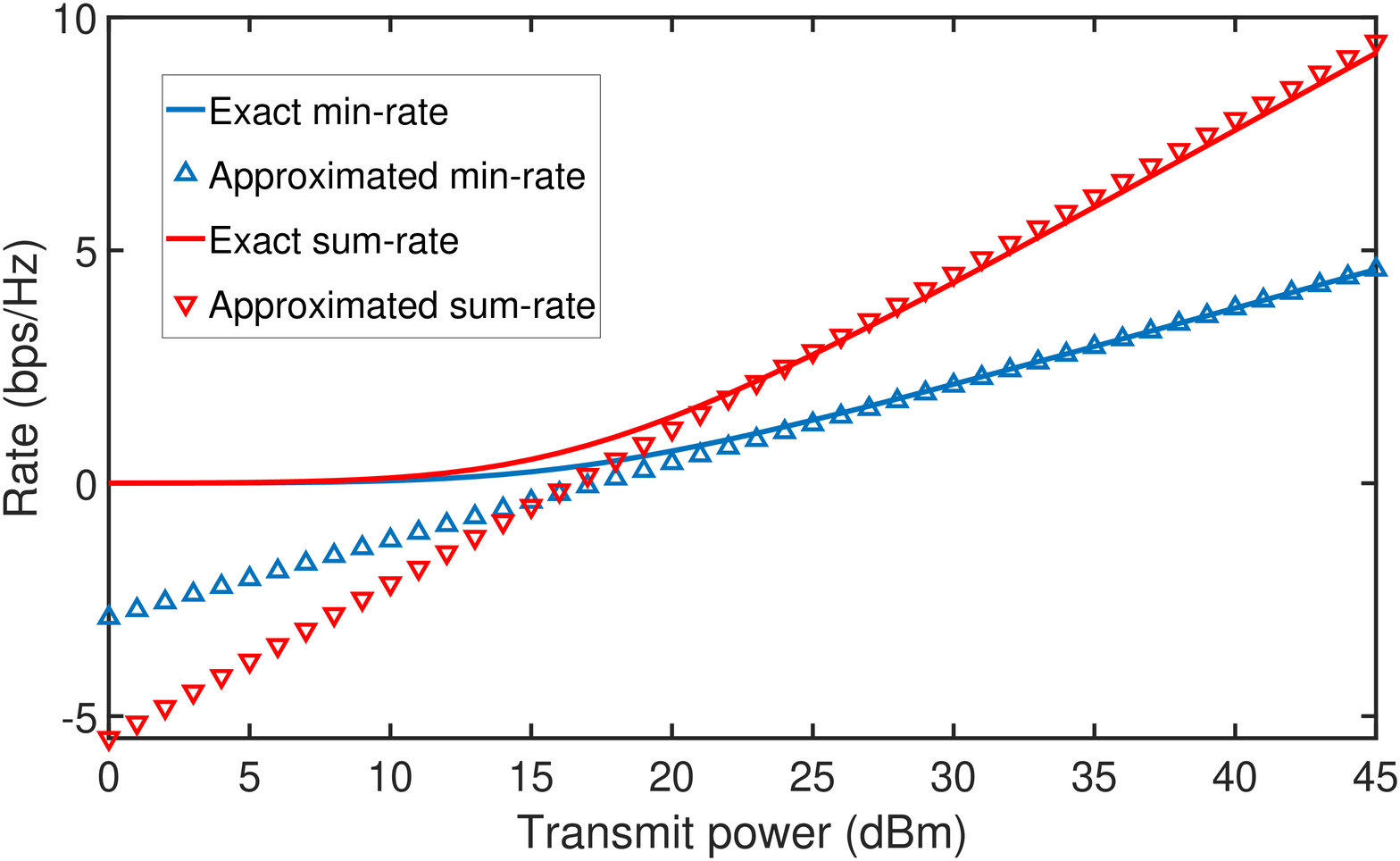} 
\caption{OMA scheme.}
\label{fig:oma-approx}
\end{subfigure}
\begin{subfigure}{0.5\textwidth}
\includegraphics[ width=\linewidth]{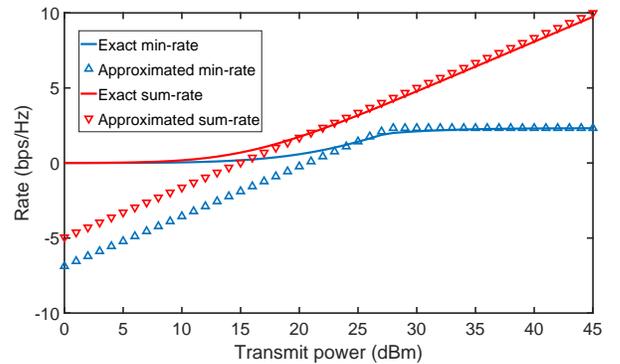}
\caption{NOMA scheme.}
\label{fig:noma-approx}
\end{subfigure}
\caption{Approximated and exact rates for sum-rate and min-rate of OMA and NOMA schemes at Proposition \ref{proposition-sum} and Proposition \ref{proposition-min}.} 
\label{fig:prop12-approx}
\end{figure}
NOMA scheme requires more complex receiver rather than OMA (Fig. \ref{fig:dynamic}). Actually, performing SIC at near user leads to a more complex decoder at the receiver of near user \cite{benjebbour_SIC}, and consequently,  more required power. 
Hence, more power consumption and complexity are the costs that we pay at receiver side to have better throughput by applying the NOMA scheme. 
 Based on the observations in Proposition \ref{proposition-sum} and Remark \ref{remark1}, we propose a dynamic NOMA/OMA scheme in which OMA is selected when applying NOMA has zero or only negligible gain. 
According to Proposition \ref{proposition-sum}, the gain of NOMA scheme over OMA relies on the channel gains of BS-to-relay and relay-to-users links. Due to movements of UAV node and vehicles in our system model, these channel gains are changing fast, and hence, their orders change quickly. Hence, as one can see in Fig. \ref{fig:dynamic}, transceiver parts of nodes have to quickly switch among three working modes, i.e., OMA, NOMA with SIC at user 1 and NOMA with SIC at user 2. These quick switches causes more power and complexity costs for transceiver of BS, relay and users.  In order to prevent our dynamic scheme from the quick switches among working modes and keep the complexity costs of applying NOMA scheme as little as possible, we define a superiority threshold parameter $R_{th}^S$. When the sum-rate superiority of NOMA in comparison to OMA is less than this threshold value, i.e., $R_{S,N}-R_{S,O}<R_{\text{th}}^{S}$, our dynamic scheme is set to OMA mode. $R_{th}^S$ can be chosen based on the priorities of the network operator between complexity and throughput. 
It is clear that increasing $R_{th}^S$ leads to selection of more OMA working modes by proposed dynamic scheme which leads to less decoding and switching complexity in transceiver design in the cost of missing more throughput.\par
   Note that in Proposition \ref{proposition-sum}, we assumed that $h_{1}>h_{2}$, and we have three other states for the channel gain orders in the case where $h_{2}>h_{1}$.  At each state, based on the order and difference of channel gains, we can apply NOMA or OMA. At time slot $n$ and state $s$, in order to determine three modes for multiple access scheme, i.e., $m=1,2,3$, we introduce three binary matrices, namely $\mathbf{A}=[\alpha_{s,n}]_{S\times N}$, $\mathbf{B}=[\beta_{s,n}]_{S\times N}$, and $\mathbf{\Gamma}=[\gamma_{s,n}]_{S\times N}$, respectively. The summary of the proposed dynamic multiple access scheme is shown in Table \ref{table:1}.
\begin{table}[t]
\tiny
\centering
\caption{Proposed dynamic NOMA/OMA scheme.}
\tabcolsep=0.11cm
\scalebox{1}{
 \begin{tabular}{|c| c| c| c|c|c|c|} 
 \hline
 State & Channel gain order& Channel difference& Multiple access scheme& $\alpha_{s,n}$ &$\beta_{s,n}$ &$\gamma_{s,n}$ \\ [0.5ex]
 \hline\hline
 s=1 &
 \multirow{2}{9em}{$h_{r}>h_{1}>h_{2}$}& $\frac{1}{2}\log_2(\frac{h_{1}}{h_{2}})>R_{\text{th}}^{S}$ & NOMA- SIC at vehicle 1 (m=1)& 1 &0&0\\\cline{1-1}\cline{3-7} 
 s=2& &$\frac{1}{2}\log_2(\frac{h_{1}}{h_{2}})<R_{\text{th}}^{S}$ & OMA (m=3) & 0 & 0 & 1\\ 
\hline
s=3 & \multirow{2}{9em}{$h_{1}>h_{r}>h_{2}$}& $\frac{1}{2}\log_2(\frac{h_{r}}{h_{2}})>R_{\text{th}}^{S}$ & NOMA- SIC at vehicle 1 (m=1) & 1 &0&0\\\cline{1-1}\cline{3-7}  
s=4 & &$\frac{1}{2}\log_2(\frac{h_{r}}{h_{2}})<R_{\text{th}}^{S}$ & OMA (m=3)& 0 & 0 & 1\\ 
\hline
s=5 & $h_{1}>h_{2}>h_{r}$& -& OMA (m=3) & 0&0&1\\ 
\hline
s=6 & \multirow{2}{9em}{$h_{r}>h_{2}>h_{1}$}& $\frac{1}{2}\log_2(\frac{h_{2}}{h_{1}})>R_{\text{th}}^{S}$ & NOMA- SIC at vehicle 2 (m=2) & 0 &1&0\\ \cline{1-1}\cline{3-7}  
 s=7& &$\frac{1}{2}\log_2(\frac{h_{2}}{h_{1}})<R_{\text{th}}^{S}$ & OMA (m=3) & 0 & 0 & 1\\ 
\hline
s=8 & \multirow{2}{9em}{$h_{2}>h_{r}>h_{1}$}& $\frac{1}{2}\log_2(\frac{h_{r}}{h_{1}})>R_{\text{th}}^{S}$ & NOMA- SIC at vehicle 2 (m=2) & 0 &1&0\\\cline{1-1}\cline{3-7}   
 s=9& &$\frac{1}{2}\log_2(\frac{h_{r}}{h_{1}})<R_{\text{th}}^{S}$ & OMA (m=3) & 0 & 0 & 1\\ 
\hline
s=10& $h_{2}>h_{1}>h_{r}$& -& OMA (m=3) & 0&0&1\\ 
\hline
\end{tabular}
}
\label{table:1}
\end{table}
\section{Sum-Rate Maximization}
In this section, we formulate the optimization problem of our system. We aim to optimize the end-to-end sum-rate by finding the optimal values for relay trajectory ($\mathbf{X}=[x_{r}[1],...,x_{r}[N]]$ and $\mathbf{Y}=[y_{r}[1],...,y_{r}[N]]$), power coefficients of NOMA vehicles at the BS ($\mathbf{P_{1}}=[p_{1}[1],...,p_{1}[N]]$ and $\mathbf{P_{2}}=[p_{2}[1],...,p_{2}[N]]$), allocated power of the relay node ($\mathbf{P_{r}}=[p_{r}[1],...,p_{r}[N]]$), and operation mode selection matrices for multiple access scheme ($\mathbf{A}=[\alpha_{s,n}]_{S\times N}$, $\mathbf{B}=[\beta_{s,n}]_{S\times N}$, and $\mathbf{\Gamma}=[\gamma_{s,n}]_{S\times N}$). Therefore, the optimization problem can be formulated as
\begin{alignat}{2}
&\text{(P1):}&&\underset{\mathbf{P}_1,\mathbf{P}_2,\mathbf{P}_r,\mathbf{X},\mathbf{Y},\mathbf{A},\mathbf{B},\mathbf{\Gamma}}{\text{max}}~~~        \sum_{n=1}^{N}\sum_{s=1}^{S}\sum_{k=1}^{K} \alpha_{s,n}R_{k}^{1}[n]\nonumber\\
&&&~~~~~~~~~~~~~~~~~~~~~~~~~~~~~+\beta_{s,n}R_{k}^{2}[n]+\gamma_{s,n}R_{k}^{3}[n]\label{eq:obj_fun}\\
&\text{s.t.} &      &  \sum_{s=1}^{S} \alpha_{s,n}R_{k}^{1}[n]+\beta_{s,n}R_{k}^{2}[n]+\gamma_{s,n}R_{k}^{3}[n]\geq R_{k}^{t}, ~~\forall n, \forall k, \label{eq:constraint-rk}\\
&&&(x[n]-x[n-1])^{2}+(y[n]-y[n-1])^{2}\leq (V\tau)^{2},\nonumber\\
&&&~~~~~~~~~~~~~~~~~~~~~~~~~~~~~~~~~~~~~~~~~~~n=2, ..., N,\label{eq:constraint-v}\\
&&& (x[1]-x_{s})^{2}+(y[1]-y_s)^{2}\leq(V\tau)^{2},\label{eq:constraint-s}\\
&&& (x[N]-x_{f})^{2}+(y[N]-y_{f})^{2}\leq (V\tau)^{2},\label{eq:constraint-f}\\
              &   &      & x_{\text{min}}\leq x[n]\leq x_{\text{max}},~y_{\text{min}}\leq y[n]\leq y_{\text{max}},~\forall n,\label{eq:constraint-range}\\
              &&&\sum_{n=1}^{N}\sum_{s=1}^{S}(\alpha_{s,n}+\beta_{s,n}+\gamma_{s,n})(p_{1}[n]+p_{2}[n])\leq E_{s},\label{eq:constraint-ps}\\
              &&& \sum_{n=1}^{N}p_{r}[n]\leq E_{r}, \label{eq:constraint-pr}\\
              &&& \sum_{s=1}^{S}\alpha_{s,n}(p_{2}[n]-p_{1}[n])+\beta_{s,n}(p_{1}[n]-p_{2}[n])\geq 0, ~~\forall n, \label{eq:constraint-noma}
              \\
               &&& p_{r}[n]\geq 0,~ p_{1}[n]\geq 0,~ p_{2}[n]\geq 0, ~~\forall n, \label{eq:constraint+}
\end{alignat}
where $R_{1}^{1}$, $R_{2}^{1}$, $R_{1}^{2}$, $R_{2}^{2}$, and $R_{k}^{3}$ ($k \in \{1,2\}$)  was calculated in (\ref{r1-mode1}), (\ref{r2-mode1}), (\ref{r1-mode2}), (\ref{r2-mode2}), and (\ref{rate-oma}), respectively. Note that $R_{k}^{m}$ indicates the rate of vehicle $k$ at mode $m$, and $K=2$ is the number of vehicles. Constraint (\ref{eq:constraint-rk}) guarantees minimum target rate $R_{k}^{t}$  for vehicle $k$ at time slot $n$. Constraint (\ref{eq:constraint-v}) indicates the velocity constraint for the movements of the relay node at each time slot. Constraints  (\ref{eq:constraint-s}) and (\ref{eq:constraint-f}) show that the UAV must start and finish its trajectory at specified locations. Constraint (\ref{eq:constraint-range}) indicates the range that the UAV can fly. Constraints (\ref{eq:constraint-ps}) and (\ref{eq:constraint-pr})   indicate the total energy constraints for the BS and relay  nodes, respectively, in which $E_s=N\bar{P}_s$ and $E_r=N\bar{P}_r$. Constraint (\ref{eq:constraint-noma}) imposes a constraint to the allocated powers of the vehicles, in order to perform SIC at each time slot $n$. Note that $\alpha_{s,n}$, $\beta_{s,n}$, and $\gamma_{s,n}$ are indicator functions for three operation modes, and at each time slot $n$, only one of them must equal one.
 \par
The objective function of (P1) and constraint (\ref{eq:constraint-rk}) are not concave with respect to trajectory and power variables. Also, (P1) contains operation mode coefficients which are integer variables. Hence, (P1) is a NP-hard non-convex optimization problem and can not be solved using standard convex  optimization methods. In order to solve this problem, we use the AO method. Therefore, we first solve trajectory planning and operation mode selection sub-problem with fixed power allocations. Then, we propose a solution for power allocation sub-problem  with fixed trajectory and operation mode coefficients. Finally, by alternatively optimizing the trajectory, mode selection, and power allocation sub-problems, we propose an iterative algorithm  to solve (P1).  
\subsection{Trajectory Planning and Operation Mode Selection for Fixed Power Allocation}
In this subsection, we solve the first sub-problem of (P1) to find the optimal values for the UAV trajectory (i.e., $\mathbf{X}$ and $\mathbf{Y}$) and operation mode coefficients (i.e., $\mathbf{A}$, $\mathbf{B}$, and $\mathbf{\Gamma}$) assuming that the allocated powers are given. Hence, we have
\begin{alignat}{2}
\text{(P1.1):}~~~~&\underset{\mathbf{X},\mathbf{Y},\mathbf{A},\mathbf{B},\mathbf{\Gamma}}{\text{max}}        &~~~~&\sum_{n=1}^{N}\sum_{s=1}^{S}\sum_{k=1}^{K} \alpha_{s,n}R_{k}^{1}[n]\nonumber\\
&&&~~~~~~+\beta_{s,n}R_{k}^{2}[n]+\gamma_{s,n}R_{k}^{3}[n]\label{eq:obj_fun_p1.1}\\
&\text{s.t.}&  &     (\ref{eq:constraint-rk}),  (\ref{eq:constraint-v}), (\ref{eq:constraint-s}), (\ref{eq:constraint-f}), (\ref{eq:constraint-range}). \nonumber
\end{alignat}
If we replace the channel model formula in (\ref{channel-model}) at each vehicle rate formula in the equations (\ref{r1-mode1}), (\ref{r2-mode1}), (\ref{r2-mode2}), (\ref{r1-mode2}) and (\ref{rate-oma}), one can see that (P1.1) is not a convex problem with respect to variables $\mathbf{X}$ and $\mathbf{Y}$. Indeed, the objective function in (\ref{eq:obj_fun_p1.1}) and constraint (\ref{eq:constraint-rk}) are not concave functions. Also, note that operation mode coefficients are integer variables. We propose to use the AO method for solving this problem. For fixed trajectory planning, the channel power gains are known and we can use Table I to determine $\mathbf{A}$, $\mathbf{B}$, and $\mathbf{\Gamma}$. For given operation mode coefficients, (P1.1) is still non-convex, and we utilize the SCA method in which a concave approximation of the original objective function is maximized iteratively.
\begin{lemma}\label{lemma1}
Consider $x$ and $y$ as two variables, and $a$, $b$, $c$, and $d$ as constants. The function  $f(x,y)=\frac{1}{ax+by+cxy+d}$,
is a convex function with respect to $x$ and $y$, if and only if $ax+by+cxy+d>0$, and $ab=cd$.
\end{lemma}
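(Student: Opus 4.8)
The plan is to work directly with the Hessian of $f$. Write $g(x,y) = ax+by+cxy+d$ so that $f = g^{-1}$, and recall that a twice-differentiable function of two variables is convex on an open convex set exactly when its Hessian is positive semidefinite throughout that set. First I would record the derivatives: $\nabla f = -g^{-2}\nabla g$ and $\nabla^2 f = g^{-3}\big(2\,\nabla g\,(\nabla g)^{\top} - g\,\nabla^2 g\big)$, where $\nabla g = (a+cy,\ b+cx)^{\top}$ and $\nabla^2 g = \bigl(\begin{smallmatrix}0 & c\\ c & 0\end{smallmatrix}\bigr)$. Setting $M := 2\,\nabla g\,(\nabla g)^{\top} - g\,\nabla^2 g$, the sign of $\nabla^2 f$ is governed by $M$ together with the sign of $g^{3}$, so the whole question reduces to deciding when the symmetric $2\times 2$ matrix $M$ is positive semidefinite and when $g>0$.

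Next I would compute the two quantities that settle positive semidefiniteness of a symmetric $2\times 2$ matrix. The diagonal entries of $M$ are $2(a+cy)^2$ and $2(b+cx)^2$, always nonnegative, and the determinant simplifies — this is the one genuinely computational step — via the identity $4(a+cy)(b+cx)-cg = 4(ab-cd)+3c\,g$ (expand and substitute $ax+by+cxy = g-d$) to
\[
\det M \;=\; 4(a+cy)^2(b+cx)^2 - \big(2(a+cy)(b+cx)-cg\big)^2 \;=\; cg\big(4(a+cy)(b+cx)-cg\big) \;=\; cg\big(4(ab-cd)+3cg\big),
\]
so that $\det\nabla^2 f = \det M / g^{6}$. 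For the ``if'' direction, suppose $g>0$ on the domain and $ab=cd$: then $\det M = 3c^2 g^2 \ge 0$ and the diagonal entries are nonnegative, so $M\succeq 0$; dividing by $g^{3}>0$ gives $\nabla^2 f\succeq 0$, hence $f$ is convex. (Equivalently, when $ab=cd$ and $c\neq 0$ one has $g=\tfrac{1}{c}(cx+b)(cy+a)$, so $\log f$ is, up to a constant, a sum of two functions of the form $-\log(\text{affine})$ and hence convex, whence $f=\exp(\log f)$ is convex as the composition of a convex increasing function with a convex one; the case $c=0$ reduces to $1/(\text{affine})$.)

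For the ``only if'' direction, assume $f$ is convex, so $\nabla^2 f \succeq 0$ on the domain. The $(1,1)$ entry $2(a+cy)^2/g^{3}\ge 0$ forces $g>0$ wherever $a+cy\neq 0$, i.e. throughout the (open) domain, which is the first claimed condition. Nonnegativity of $\det\nabla^2 f$ then forces $cg\big(4(ab-cd)+3cg\big)\ge 0$, i.e. $g\big(3c^2 g + 4c(ab-cd)\big)\ge 0$, at every point. Since the natural domain of $f$ abuts the surface $\{g=0\}$, $g$ takes arbitrarily small positive values, and letting $g\to 0^{+}$ shows $4c(ab-cd)$ cannot be negative; combining this with the corresponding sign analysis on the rest of the domain pins the position-independent term to zero, i.e. $ab=cd$. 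I expect this last step — controlling $\det\nabla^2 f$ \emph{uniformly} over the domain rather than at a single point, and in particular the behaviour as $g\to 0^{+}$ — to be the main obstacle; once the identity $4(a+cy)(b+cx)-cg = 4(ab-cd)+3cg$ is spotted, the remaining Hessian algebra is routine.
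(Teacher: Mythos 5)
Your ``if'' direction is correct and is essentially the paper's own argument: you compute the same Hessian $\nabla^2 f = M/g^{3}$ with $M = 2\nabla g\,(\nabla g)^{\top}-g\nabla^2 g$, whose entries coincide with the paper's $2(a+cy)^2$, $2(b+cx)^2$ and $v$, and your closed form $\det M = cg\bigl(4(ab-cd)+3cg\bigr)$ is a slightly cleaner route to positive semidefiniteness than the paper's completing-the-square; your parenthetical factorization $g=\tfrac{1}{c}(cx+b)(cy+a)$ is the same device as the paper's ``inverse proof'' $f(x,y)=1/\bigl(((c/b)x+1)(by+d)\bigr)$.

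The genuine gap is in your ``only if'' direction, and it cannot be repaired because necessity is in fact false. From $\det M = g\bigl(4c(ab-cd)+3c^{2}g\bigr)\ge 0$ on $\{g>0\}$, letting $g\to 0^{+}$ yields only the inequality $4c(ab-cd)\ge 0$; and whenever $c(ab-cd)\ge 0$ the determinant condition is automatically satisfied for \emph{every} $g>0$, so no ``sign analysis on the rest of the domain'' can pin $ab-cd$ to zero. Concretely, $a=b=c=1$, $d=0$ gives $f=1/\bigl((1+x)(1+y)-1\bigr)$, which by your own formula has $\det M=g(4+3g)>0$ and positive diagonal, hence is strictly convex on the convex set $\{x>0,\ y>0\}$ even though $ab-cd=1$; likewise $c=0$, $a=b=1$ gives $1/(x+y+d)$, convex as $1/t$ composed with an affine map, again with $ab\neq cd$. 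So the biconditional holds only in the ``if'' direction. The paper's proof has the same defect --- the sentence ``for writing the numerator in the square form we must have $ab=cd$'' only shows that that particular algebraic trick requires $ab=cd$, and its ``inverse proof'' is a second sufficiency argument --- but since the lemma is only ever invoked in the sufficiency direction (to certify convexity of the modified SINR denominators in Proposition 3), nothing downstream is affected. You should drop the ``only if'' claim rather than attempt to prove it.
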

\begin{proof}
See Appendix \ref{lemma 1}. 
\end{proof}
  In the following proposition, we use Lemma \ref{lemma1} to find a concave approximation for the rates. Note that the location of the UAV is indicated by $(x^{l}[n],y^{l}[n])$ at the $l^{\text{th}}$ iteration. Also, the lower-bound (LB) approximated  rate and SINR of vehicle $k$ ($k=1,2$) in mode $m$ ($m=1,2,3$) and at iteration $l$ is indicated by $R_{k,\text{lb}}^{l,m}[n]$ and $\gamma_{k,\text{lb}}^{l,m}[n]$, respectively. Note that we remove the time slot number $n$ in the following proposition equations to make the formulas simpler to read.

\begin{proposition}\label{trajectory-iter-lb}
At each time slot $n$, for any given power allocation at the BS and UAV nodes (i.e., $p_{r},~ p_{1}$, and $p_{2}$), and for any trajectory at the $l^{\text{th}}$ iteration (i.e., $x^l,~\text{and}~y^l$), one approximated concave non-decreasing lower-bound of the achievable rate of vehicle $k$ at mode $m$ and the $(l+1)^{\text{th}}$ iteration of the SCA method is
\begin{equation}
\begin{split}\label{rate-traj}
R_{k,\text{lb}}^{l+1,m}&=    c_m\log_{2}(1+\gamma_{k,\text{lb}}^{l,m}+d_{k,r}^{l,m}(\psi_{r}^{l+1}-\psi_{r}^{l})\\&+d_{k,k}^{l,m}(\psi_{k}^{l+1}-\psi_{k}^{l})),k=1,2,m=1,2,3,
\end{split}
\end{equation}
where $c_m=1$ for $m=1,2$, and $c_m=\frac{1}{2}$ for $m=3$. Also, $\psi_{r}^{l}=\frac{\sigma^2((x^{l})^{2}+(y^{l})^{2}+h^{2})}{\beta_{0}}$, $\psi_{i}^{l}=\frac{\sigma^2((x^{l}-x_{i})^{2}+(y^{l}-y_{i})^{2}+h^{2})}{\beta_{0}}$, for $i=1,2$, and SINR of vehicle $k$ (for $k=1,2$) equals
    $\gamma_{k,\text{lb}}^{l,m}= d_mp_{r}p_{k}$, where $d_m^{-1}=p_{r}\psi_{r}^{l}+(p_{1}+p_{2})\psi_{k}^{l}+\psi_{r}^{l}\psi_{k}^{l}+(p_{1}+p_{2})p_{r}$, for $m=1,2$,  and $d_m^{-1}=p_{r}\psi_{r}^{l}+p_{k}\psi_{k}^{l}+\psi_{r}^{l}\psi_{k}^{l}+p_{r}p_{k}$, for $m=3$. Partial derivative of SINR of vehicle $k$ (for $k=1,2$) to variable $i\in\{r,k\}$ is denoted by $d_{k,i}^{l,m}[n]$ and equals
    $d_{k,r}^{l,m}=-d_m^2p_{r}p_{k}(p_{r}+\psi_{k}^{l})$, for $m=1,2,3$,
    $d_{k,k}^{l,m}=-d_m^2p_{r}p_{k}((p_{1}+p_{2})+\psi_{r}^{l})$, for $m=1,2$, and
    $d_{k,k}^{l,m}=-d_m^2p_{r}p_{k}(p_{k}+\psi_{r}^{l})$, for $m=3$.
\end{proposition}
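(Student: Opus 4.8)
The plan is to reduce every per-mode SINR to the one algebraic shape covered by Lemma~\ref{lemma1}, convexify it in the ``inverse-channel'' variables, and then transport the bound through the logarithm. First I would substitute the free-space model~\eqref{channel-model} into the rate expressions \eqref{r1-mode1}, \eqref{r2-mode1}, \eqref{r2-mode2}, \eqref{r1-mode2} and \eqref{rate-oma}, writing $h_r=\sigma^2/\psi_r$ and $h_k=\sigma^2/\psi_k$ with $\psi_r,\psi_k$ as defined in the statement. A short calculation shows that in each mode the SINR becomes a ratio whose numerator involves only the (fixed) transmit powers and whose denominator has the template $a\psi_r+b\psi_k+c\psi_r\psi_k+e$ with nonnegative constants. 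The coefficients $a,b,c$ are read directly off \eqref{r1-mode1}--\eqref{rate-oma}; the constant $e$ I would, if needed, adjust upward (and, in the OMA mode, also rescale the numerator) so as to enforce the identity $ab=ce$ required by Lemma~\ref{lemma1}, the point being that these adjustments only shrink the SINR and hence shrink $c_m\log_2(1+\cdot)$, so any lower bound obtained for the adjusted SINR $\tilde\gamma_k^m$ remains a lower bound for the true rate $R_k^m=c_m\log_2(1+\gamma_k^m)$.

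Once the denominator has the form $a\psi_r+b\psi_k+c\psi_r\psi_k+e$ with $a,b,c,e>0$ and $ab=ce$, and stays strictly positive on the admissible region, Lemma~\ref{lemma1} shows that $(\psi_r,\psi_k)\mapsto 1/(a\psi_r+b\psi_k+c\psi_r\psi_k+e)$ is jointly convex; multiplying by the positive, trajectory-independent numerator, $\tilde\gamma_k^m$ is convex in $(\psi_r,\psi_k)$. A differentiable convex function majorises each of its first-order Taylor expansions, so linearising $\tilde\gamma_k^m$ about $(\psi_r^l,\psi_k^l)$ gives
\begin{equation}\nonumber
\tilde\gamma_k^m\;\ge\;\gamma_{k,\text{lb}}^{l,m}+d_{k,r}^{l,m}(\psi_r-\psi_r^l)+d_{k,k}^{l,m}(\psi_k-\psi_k^l),
\end{equation}
with $\gamma_{k,\text{lb}}^{l,m}=\tilde\gamma_k^m(\psi_r^l,\psi_k^l)=d_mp_rp_k$ and $d_{k,r}^{l,m},d_{k,k}^{l,m}$ the two partials. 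Differentiating the quotient reproduces exactly the stated $-d_m^2p_rp_k(p_r+\psi_k^l)$ and $-d_m^2p_rp_k((p_1+p_2)+\psi_r^l)$ (respectively $-d_m^2p_rp_k(p_k+\psi_r^l)$ for $m=3$), and both are negative. Since $t\mapsto c_m\log_2(1+t)$ is increasing, applying it to $\gamma_k^m\ge\tilde\gamma_k^m\ge(\text{affine minorant})$ and evaluating at the $(l+1)$-th iterate $(\psi_r^{l+1},\psi_k^{l+1})$ yields $R_k^m\ge R_{k,\text{lb}}^{l+1,m}$; the surrogate inherits its monotone (``non-decreasing'') character from $c_m\log_2(1+\cdot)$ acting on the linearised SINR.

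Concavity in the true optimisation variables $x^{l+1},y^{l+1}$ is the final check. By construction $\psi_r^{l+1}$ and $\psi_k^{l+1}$ are convex quadratics in $(x^{l+1},y^{l+1})$, and since $d_{k,r}^{l,m},d_{k,k}^{l,m}<0$ the term $d_{k,r}^{l,m}\psi_r^{l+1}+d_{k,k}^{l,m}\psi_k^{l+1}$ is concave; adding the constant $1+\gamma_{k,\text{lb}}^{l,m}-d_{k,r}^{l,m}\psi_r^l-d_{k,k}^{l,m}\psi_k^l$ (which equals $1+\gamma_{k,\text{lb}}^{l,m}>0$ at the expansion point) keeps it concave and positive there, so the argument of the logarithm is concave; composing with the concave nondecreasing map $\log_2$ and multiplying by $c_m>0$ preserves concavity, which finishes the proof.

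I expect the genuine work to be concentrated in the first step: verifying mode by mode that the substitution $h_i=\sigma^2/\psi_i$ really does collapse each of the rate expressions \eqref{r1-mode1}--\eqref{rate-oma} into the common $a\psi_r+b\psi_k+c\psi_r\psi_k+e$ template, and that the minimal adjustment enforcing $ab=ce$ (namely $e=ab/c$, i.e. $(p_1+p_2)p_r$ for $m=1,2$ and $p_rp_k$ for $m=3$) really only decreases the SINR. After that, Lemma~\ref{lemma1}, the tangent-plane inequality, the derivative bookkeeping, and the standard composition rules for concavity are routine.
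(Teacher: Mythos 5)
Your proposal follows essentially the same route as the paper's proof in Appendix B: substitute $\psi_i=\sigma^2/h_i$ into each mode's SINR, pad the denominator with exactly the fixed term ($(p_1+p_2)p_r$ for $m=1,2$, $p_rp_k$ for $m=3$) needed to satisfy the $ab=cd$ condition of Lemma~\ref{lemma1} (correctly observing that this only shrinks the SINR and so preserves the lower bound), linearize the resulting convex function of $(\psi_r,\psi_k)$ at the $l^{\text{th}}$ iterate via the tangent-plane inequality, and compose with the increasing concave logarithm, with concavity in $(x^{l+1},y^{l+1})$ following from the negativity of $d_{k,r}^{l,m},d_{k,k}^{l,m}$ and the convexity of the $\psi$'s. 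The one point where you drift from the paper is the reading of ``non-decreasing'': the paper means that because the surrogate is tight at the expansion point ($R_{k,\text{lb}}^{l,m}=R_{k}^{l,m}$) and is maximized at each step, the achieved rates do not decrease across SCA iterations, whereas you attribute the property to the monotonicity of $c_m\log_2(1+\cdot)$ --- a minor slip that is fixed in one line.
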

\begin{proof}
See Appendix \ref{appendix-traj}.  
\end{proof}
This proposition shows that for given trajectory and mode coefficients at the $l^{\text{th}}$ iteration, the sum-rate of vehicles at (P1.1) is lower-bounded  by the summation of rates in (\ref{rate-traj}). It then follows that the optimal value of (P1.1) is lower-bounded by the optimal value of the following problem 
\begin{alignat}{2}
\text{(P1.2):}~~&\underset{\mathbf{X}^{l+1},\mathbf{Y}^{l+1}}{\text{max}}    ~~~    &&\sum_{n=1}^{N}\sum_{s=1}^{S}\sum_{k=1}^{K} \alpha_{s,n}^{l}R_{k,\text{lb}}^{l+1,1}[n]\nonumber\\
&&&~~~~~+\beta_{s,n}^{l}R_{k,\text{lb}}^{l+1,2}[n]+\gamma_{s,n}^{l}R_{k,\text{lb}}^{l+1,3}[n]\label{eq-traj:obj_fun}\\
&\text{s.t.}&     &  \sum_{s=1}^{S} \alpha_{s,n}^{l}R_{k,\text{lb}}^{l+1,1}[n]+\beta_{s,n}^{l}R_{k,\text{lb}}^{l+1,2}[n]\nonumber\\
&&&~~~~~+\gamma_{s,n}^{l}R_{k,\text{lb}}^{l+1,3}[n]\geq R_{k}^{t}, \forall n, \forall k, \label{eq-traj:constraint-r1}\\
&&& (\ref{eq:constraint-v}), (\ref{eq:constraint-s}), (\ref{eq:constraint-f}), (\ref{eq:constraint-range}).\nonumber
\end{alignat}
Problem (P1.2) is a convex problem which can be efficiently solved by convex optimization techniques such as the interior-point method. Note that we proved in Proposition \ref{trajectory-iter-lb} that the objective function of (P1.2) is non-decreasing over iterations and is globally upper-bounded by the optimal value of (P1.1). Therefore, the proposed sub-optimal algorithm converges. One can see the summary of the proposed iterative method for solving (P1.1) in Algorithm 1. 
\begin{algorithm}
\caption{Iterative trajectory optimization and operation mode selection for (P1.1) with fixed power allocation.}\label{alg1}
\begin{algorithmic}[1]
\State  Initialize the UAV trajectory $\mathbf{X}^{l}$ and $\mathbf{Y}^{l}$, and let $l=0$.
\Repeat
\State Find the binary coefficients $\alpha_{s,n}^l,~\beta_{s,n}^l~ \text{and}~ \gamma_{s,n}^l$, for each $n$ based on the UAV trajectory $\mathbf{X}^{l}$ and $\mathbf{Y}^{l}$ and locations of vehicles and BS (refer to Table \ref{table:1}).
\State Solve the convex problem (P1.2) by the interior-point method and find the optimal solution $\mathbf{X}^{l+1}$ and $\mathbf{Y}^{l+1}$.
\State Update $l=l+1$.
\Until {convergence or a maximum number of iterations is reached.}
\end{algorithmic}
\end{algorithm}
\subsection{Power Allocation with Fixed Trajectory}
In this subsection, we solve the second sub-problem of (P1) to find optimal values for the power allocation of vehicles  at the BS ($\mathbf{P}_1$ and $\mathbf{P}_2$) and the UAV power ($\mathbf{P}_r$), assuming that the trajectory of the UAV ($\mathbf{X}$ and $\mathbf{Y}$) is fixed. Note that for a fixed trajectory, the operation mode coefficients $\mathbf{A}$, $\mathbf{B}$, and $\mathbf{\Gamma}$ are known. Hence, the optimization problem can be written as
\begin{alignat}{2}
\text{(P1.3):}~~~&\underset{\mathbf{P_1},\mathbf{P_2},\mathbf{P_r}}{\text{max}}  ~~~~~~      &&\sum_{n=1}^{N}\sum_{s=1}^{S}\sum_{k=1}^{K} \alpha_{s,n}R_{k}^{1}[n]\nonumber\\
&&&~~~~~~~~+\beta_{s,n}R_{k}^{2}[n]+\gamma_{s,n}R_{k}^{3}[n]\label{p14}\\
&\text{s.t.}&       & (\ref{eq:constraint-rk}),  (\ref{eq:constraint-ps}), (\ref{eq:constraint-pr}), (\ref{eq:constraint-noma}), (\ref{eq:constraint+}). \nonumber
\end{alignat}
 It is clear that the objective function of (P1.3) in (\ref{p14}) and constraint (\ref{eq:constraint-rk}) are not concave functions with respect to $\mathbf{P}_1$, $\mathbf{P}_2$, and $\mathbf{P}_r$. In the following lemmas, we provide two convex functions which can be used for the concave approximation of vehicle rates.
\begin{lemma}\label{con-log}
Consider $x$, $y$, and $z$ as three variables, and $a$, $b$, $c$, $d$, $e$, and $f$ as constants. Then the function $f(x,y,z)=\log_2(axy+bxz+cx+dy+ez+f)$,
is a concave function with respect to $x$, $y$, and $z$, if $\frac{a}{d}=\frac{b}{e}=\frac{c}{f}$.
\end{lemma}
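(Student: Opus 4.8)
The plan is to reduce the claim to a known composition rule: if a function $g$ is concave and affine-like enough that $\log_2 g$ inherits concavity, we are done. Concretely, $\log_2(\cdot)$ is concave and increasing on its domain, so by the standard composition rule $\log_2(g(x,y,z))$ is concave provided $g(x,y,z) = axy+bxz+cx+dy+ez+f$ is concave (and positive on the region of interest). Since $\log_2$ is also increasing, it would even suffice to have $g$ be \emph{affine}; the content of the lemma is that under the hypothesis $\frac{a}{d}=\frac{b}{e}=\frac{c}{f}$ the quadratic $g$ is, in fact, representable in a form that makes $\log_2 g$ concave without $g$ itself being affine.

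First I would introduce the common ratio: let $t$ be the constant with $a = td$, $b = te$, $c = tf$. Substituting, $g(x,y,z) = td\,xy + te\,xz + tf\,x + dy + ez + f = (tx+1)(dy+ez+f)$. This factorization is the crux. Then $\log_2 g(x,y,z) = \log_2(tx+1) + \log_2(dy+ez+f)$, a sum of two terms, each of which is a concave function (a logarithm of an affine function) of disjoint sets of the variables, hence jointly concave. One should note the degenerate cases ($d=e=f=0$, which forces $a=b=c=0$ and makes $g\equiv 0$; or $t=0$, forcing $a=b=c=0$ as well) are vacuous or trivial, and that the stated hypothesis implicitly presumes $d,e,f$ not all zero so that the ratios are well-defined.

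The only mild obstacle is bookkeeping about the domain: $\log_2 g$ is concave wherever $g>0$, and the factorization shows $g>0$ is equivalent to $tx+1$ and $dy+ez+f$ having the same sign; on the relevant feasible region (where all of these quantities arising from SINR-type expressions are positive) this holds automatically, so I would simply state that the result is understood on the set where $g>0$. I expect no real difficulty beyond presenting the factorization cleanly; the composition-with-$\log$ step is routine once the product form is exhibited.
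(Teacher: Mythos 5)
Your proof is correct and uses essentially the same argument as the paper: both factor the argument of the logarithm as a product of an affine function of $x$ and an affine function of $(y,z)$ (the paper writes it as $(x+\tfrac{d}{a})(ay+bz+c)$, you as $(tx+1)(dy+ez+f)$ with $t=a/d$), and then conclude that $\log_2 g$ is a sum of two concave log-affine terms. Your extra remarks on degenerate cases and the positivity domain are fine but do not change the substance.
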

\begin{proof}
Assuming $\frac{a}{d}=\frac{b}{e}=\frac{c}{f}$, we can show function f as
\begin{equation}
\begin{split}
    f(x,y,z)&=\log_2(axy+bxz+cx+dy+\frac{db}{a}z+\frac{dc}{a})\\&=\log_2(x+\frac{d}{a})+\log_2(ay+bz+c).
 \end{split}
\end{equation}
Hence, $f(x,y,z)$ is the summation of two concave functions, and the proof is completed. 
\end{proof}
\begin{lemma}\label{concave2}
Consider $x$ and $y$ as two variables, and $a$, $b$, $c$, and $d$ as constants. Then the function  
    $f(x,y)=\log_2(axy+bx+cy+d)$
is a concave function with respect to $x$ and $y$, if $\frac{a}{c}=\frac{b}{d}$.
\end{lemma}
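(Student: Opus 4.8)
My plan is to follow the same strategy used in the proof of Lemma~\ref{con-log}: the hypothesis $\frac{a}{c}=\frac{b}{d}$ (equivalently $ad=bc$) is precisely the condition that forces the otherwise-indefinite bilinear term $xy$ to disappear, by making $axy+bx+cy+d$ factor as a product of two affine functions; once that is done, $\log_2$ of a product splits into a sum of two concave terms and joint concavity is immediate from the composition rule.

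First I would dispose of the degenerate case $a=0$: then $ad=bc$ gives $bc=0$, so either $b=0$ and $f(x,y)=\log_2(cy+d)$, or $c=0$ and $f(x,y)=\log_2(bx+d)$; in both subcases $f$ is $\log_2$ of an affine map and hence concave. So assume $a\neq0$ and substitute $d=\frac{bc}{a}$ into the argument:
\[
axy+bx+cy+d \;=\; axy+bx+cy+\frac{bc}{a} \;=\; \bigl(ay+b\bigr)\!\left(x+\frac{c}{a}\right).
\]
Therefore, on the region where the two affine factors $ay+b$ and $x+\frac{c}{a}$ are positive — which is the regime of interest, since the quantities playing the roles of $x,y$ here are transmit powers and the argument of every logarithm in the relevant rate expressions is positive there — we get
\[
f(x,y) \;=\; \log_2\!\bigl(ay+b\bigr)+\log_2\!\left(x+\frac{c}{a}\right).
\]

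Finally, each summand is the composition of the concave, nondecreasing function $\log_2(\cdot)$ with an affine function of $(x,y)$, hence is concave in $(x,y)$; a sum of concave functions is concave, so $f$ is concave, which completes the proof. I do not anticipate any real obstacle: the lemma is essentially a one-line factorization. The only point requiring a little care — and which the written proof should make explicit, exactly as in Lemma~\ref{con-log} — is that the identity $\log_2(PQ)=\log_2 P+\log_2 Q$ requires $P>0$ and $Q>0$ separately, not merely $PQ>0$; restricting to that set, which is the physically meaningful one, is harmless here.
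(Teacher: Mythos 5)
Your proof is correct and follows essentially the same route as the paper, which simply notes that the argument of Lemma~\ref{con-log} carries over: the ratio condition makes $axy+bx+cy+d$ factor into a product of two affine terms, so the logarithm splits into a sum of two concave functions. Your added care about the degenerate case $a=0$ and about requiring each factor (not just the product) to be positive is a welcome tightening of the same argument.
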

\begin{proof}
The proof of this lemma is the same with Lemma \ref{con-log}, and hence we remove it. 
\end{proof}
\begin{lemma}\label{DC}
The sum-rate of the two vehicles can be approximated in the form of a DC function as
\begin{equation}
\begin{split}
&R_{\text{dc},1}^{1}+R_{\text{dc},2}^{1}\approx\log_{2}(p_{r}h_{1}h_{r}p_{1}+p_{r}h_{1}+p_{1}h_{r}+1)\\&-\log_2(p_{r}h_{1}+(p_{1}+p_{2})h_{r}+1)\\&+\log_{2}(p_{r}h_{2}h_{r}(p_{1}+p_{2})+p_{r}h_{2}+(p_{1}+p_{2})h_{r}+1)\\&-\log_2(p_{r}h_{2}h_{r}p_{1}+p_{r}h_{2}+p_{1}h_{r}+1),
\label{eq:obj_fun_p_DC1}
\end{split}
\end{equation}
\begin{equation}
\begin{split}
&R_{\text{dc},1}^{2}+R_{\text{dc},2}^{2}\approx\\&\log_{2}(p_{r}h_{1}h_{r}(p_{1}+p_{2})+p_{r}h_{1}+(p_{1}+p_{2})h_{r}+1)\\&-\log_2(p_{r}h_{1}h_{r}p_{2}+p_{r}h_{1}+p_{2}h_{r}+1)\\&+\log_{2}(p_{r}h_{2}h_{r}p_{2}+p_{r}h_{2}+p_{2}h_{r}+1)\\&-\log_2(p_{r}h_{2}+(p_{1}+p_{2})h_{r}+1),
\label{eq:obj_fun_p_DC2}
\end{split}
\end{equation}
\begin{equation}
\begin{split}
\sum_{k=1}^K R_{\text{dc},k}^{3}&=\sum_{k=1}^K \frac{1}{2}\bigg(\log_{2}(p_{r}h_{k}h_{r}p_{k}+p_{r}h_{k}+2p_{k}h_{r}+2)\\&-\log_2(p_{r}h_{k}+2p_{k}h_{r}+2)\bigg),
\label{eq:obj_fun_p_DC3}
\end{split}
\end{equation}
where $R_{\text{dc},k}^{m}$ indicates the DC approximation of vehicle $k$ in mode $m$.
\end{lemma}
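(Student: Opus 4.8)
The plan is to begin from the exact achievable-rate expressions of the three modes --- \eqref{r1-mode1}, \eqref{r2-mode1}, \eqref{r1-mode2}, \eqref{r2-mode2} and \eqref{rate-oma} --- normalize the noise variance, and rewrite every per-vehicle rate with the elementary identity $\log_2(1+u/v)=\log_2(u+v)-\log_2(v)$. Because the channel gains $h_r,h_1,h_2$ are fixed when the trajectory is fixed, each resulting logarithm argument is a polynomial in $(p_1,p_2,p_r)$ that is bilinear in $p_r$ versus the $p_k$'s (it never contains a cube). Adding the two vehicle rates of a given mode and collecting like terms then produces exactly the skeleton of the right-hand sides of \eqref{eq:obj_fun_p_DC1}, \eqref{eq:obj_fun_p_DC2} and \eqref{eq:obj_fun_p_DC3}: four signed logarithms for each NOMA mode, and two per vehicle for OMA.

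Next I would test, term by term, whether each logarithm argument matches the template of Lemma~\ref{con-log} (three variables, proportionality $a/d=b/e=c/f$) or Lemma~\ref{concave2} (two variables, $a/c=b/d$), so that $\log_2$ of it is concave. For the OMA rates this test passes verbatim --- for instance $p_rh_kh_rp_k+p_rh_k+2p_kh_r+2$ has $a/c=h_kh_r/(2h_r)=h_k/2=b/d$ --- which is why \eqref{eq:obj_fun_p_DC3} is stated with equality: each OMA rate is $\frac{1}{2}$ times a concave logarithm minus $\frac{1}{2}$ times the logarithm of an affine function. For the NOMA modes, the numerator-plus-denominator argument of vehicle~1 (resp. vehicle~2 in mode~2) and the denominator argument of vehicle~2 (resp. vehicle~1 in mode~2) carry a stray cross term --- the $p_2h_r$ piece hidden inside the $(p_1+p_2)h_r$ factor produced by the AF amplification gain --- that breaks the proportionality condition. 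This is the origin of the $\approx$ in the statement: I would make the approximation explicit through the substitution $(p_1+p_2)h_r\mapsto p_1h_r$ (mode~1) or $(p_1+p_2)h_r\mapsto p_2h_r$ (mode~2) in exactly those arguments, leaving untouched the arguments that are already affine (the second logarithm of \eqref{eq:obj_fun_p_DC1}) or already satisfy the three-variable condition (the third logarithm of \eqref{eq:obj_fun_p_DC1}), and then re-verify via Lemma~\ref{con-log} or Lemma~\ref{concave2} that concavity now holds; mode~2 is just the relabeling that swaps the two vehicles.

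Finally I would assemble the DC form. With every logarithm shown concave, each of the four summands is a concave logarithm minus a concave logarithm; using $f_{\mathrm{conc}}-g_{\mathrm{conc}}=(-g_{\mathrm{conc}})-(-f_{\mathrm{conc}})$ together with the fact that the negative of a concave function is convex, the whole expression becomes a difference of two convex functions --- a DC function --- which is the claim and is precisely the representation the subsequent DC/SCA step needs (it linearizes the subtracted convex part around the current iterate).

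The algebra of splitting logarithms is routine; the real work is twofold. The first part is bookkeeping: for the eight NOMA logarithm arguments one must correctly decide which of the two proportionality conditions is the operative one and whether a substitution is needed, and the pattern is not uniform across $R_1$ versus $R_2$ or across modes~1 and~2. The second, and the main obstacle, is justifying the approximation --- explaining why discarding (or merging) the $p_2h_r$ (resp.\ $p_1h_r$) cross term is mild, i.e.\ that it is an amplified-relay-noise cross contribution dominated by the retained terms in the operating regime --- so that the $\approx$ is an argued approximation rather than an unsupported one.
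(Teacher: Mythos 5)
Your proposal is correct and follows essentially the same route as the paper: split each rate via $\log_2(1+u/v)=\log_2(u+v)-\log_2(v)$, verify concavity of each logarithm argument through Lemmas~\ref{con-log} and~\ref{concave2} (which holds exactly for mode~3, giving the equality in \eqref{eq:obj_fun_p_DC3}), and obtain the NOMA approximations by removing the offending $p_2h_r$ (resp.\ $p_1h_r$) cross term from precisely the two arguments that violate the proportionality conditions. The paper's only presentational difference is that it first swaps the two denominators so that the approximation reads as "subtracting $p_2h_r$ from numerator and denominator" of a single ratio, which is immaterial once the four signed logarithms are in hand; and, like your write-up, the paper asserts rather than quantifies the mildness of this cross-term removal.
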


\begin{proof}
We can write the sum-rate of the two vehicles at mode $m=1$ from (\ref{r1-mode1}) and (\ref{r2-mode1}) as
  \begin{equation}
\label{dc}
\begin{split}
R_{1}^{1}&+R_{2}^{1}=\log_{2}(\frac{e_1}{f})\\&+\log_{2}(\frac{p_{r}h_{2}h_{r}(p_{1}+p_{2})+p_{r}h_{2}+(p_{1}+p_{2})h_{r}+1}{e_2})\\&=\log_{2}(\frac{e_1}{e_2})\\&+\log_{2}(\frac{p_{r}h_{2}h_{r}(p_{1}+p_{2})+p_{r}h_{2}+(p_{1}+p_{2})h_{r}+1}{f}),
\end{split}
\end{equation}
where $e_k=p_{r}h_{k}h_{r}p_1+p_{r}h_{k}+(p_{1}+p_{2})h_{r}+1$, and $f=p_{r}h_{1}+(p_{1}+p_{2})h_{r}+1$. According to Lemma \ref{con-log}, the second logarithmic function of second equality in (\ref{dc}) is in the form of the difference of two concave functions as (\ref{eq:obj_fun_p_DC1}). For the first logarithmic function at second equality of (\ref{dc}), we subtract $p_2h_{r}$ from the numerator and denominator to approximate it with a DC function as (\ref{eq:obj_fun_p_DC1}). With the same procedure, the sum-rate at $m=2$ can be approximated in the form of a DC function as (\ref{eq:obj_fun_p_DC2}). For $m=3$, from (\ref{rate-oma}) we have $\sum_{k=1}^K R_{k}^{3}=\sum_{k=1}^K \frac{1}{2}\log_{2}(\frac{p_{r}h_{k}h_{r}p_{k}+p_{r}h_{k}+2p_{k}h_{r}+2}{p_{r}h_{k}+2p_{k}h_{r}+2})$,
 which according to Lemma \ref{concave2} is a DC function as (\ref{eq:obj_fun_p_DC3}). Hence, the proof is completed.
\end{proof}

In order to solve (P1.3), we apply the SCA method using the approximated DC sum-rate in Lemma \ref{DC}. To this end, we find a concave approximation of the DC objective function in the following proposition. Note that the power allocations of the BS and UAV nodes are indicated by $(p_r^l,p_1^l,p_2^l)$ and $(p_r^{l+1},p_1^{l+1},p_2^{l+1})$ at the $l^{\text{th}}$ and $(l+1)^{\text{th}}$ iterations, respectively. Also, the lower-bound concave rate of vehicle $k$ at mode $m$ is indicated by $R_{k,\text{lb}}^{l+1,m}$ at each time slot $n$. 
\begin{proposition}\label{sca-power}
At each time slot $n$, for any given UAV trajectory and for any power allocation at the $l^{\text{th}}$ iteration ($p_r^l,p_1^l,p_2^l$), one approximated concave non-decreasing lower-bound of the rate of vehicle $k$ at mode $m$ and at the $(l+1)^{\text{th}}$ iteration of the SCA method equals
\begin{equation}
\begin{split}
R_{k,\text{lb}}^{l+1,m}&=\log_{2}(p_{r}^{l+1}h_{k}h_{r}p_{k}^{l+1}+p_{r}^{l+1}h_{k}+p_{k}^{l+1}h_{r}+1)\\&-\log_2(p_{r}^{l}h_{k}+(p_{1}^{l}+p_{2}^{l})h_{r}+1)\\&-d_{k}^{l,m}(p_1^{l+1}-p_1^{l})-t_{k}^{l,m}(p_2^{l+1}-p_2^{l})-c_{k}^{l,m}(p_r^{l+1}-p_r^{l}),\\&~~~~~~~~~~~~~~~~~~~~~~~~~~(m,k)=(1,1),(m,k)=(2,2),
\label{eq:obj_fun_p_DC_p1}
\end{split}
\end{equation}
\begin{equation}
\begin{split}
&R_{k,\text{lb}}^{l+1,m}=\log_{2}(p_{r}^{l+1}h_{k}h_{r}(p_{1}^{l+1}+p_{2}^{l+1})+p_{r}^{l+1}h_{k}\\&+(p_{1}^{l+1}+p_{2}^{l+1})h_{r}+1)-\log_2(p_{r}^{l}h_{k}h_{r}p_{k^{\prime}}^{l}+p_{r}^{l}h_{k}+p_{k^{\prime}}^{l}h_{r}+1)\\&-d_{k}^{l,m}(p_1^{l+1}-p_1^{l})-t_{k}^{l,m}(p_2^{l+1}-p_2^{l})-c_{k}^{l,m}(p_r^{l+1}-p_r^{l}),\\&~~~~~~~~~~~~~~~~~~~~~~~~~~~~~~~~~~(m,k)=(1,2),(m,k)=(2,1),
\label{eq:obj_fun_p_DC_p2}
\end{split}
\end{equation}
\begin{equation}
\begin{split}
&R_{k,\text{lb}}^{l+1,m}=\frac{1}{2}\bigg(\log_{2}(p_{r}^{l+1}h_{k}h_{r}p_{k}^{l+1}+p_{r}^{l+1}h_{k}+2p_{k}^{l+1}h_{r}+2)\\&-\log_2(p_{r}^{l}h_{k}+2p_{k}^{l}h_{r}+2)\\&-d_{k}^{l,m}(p_1^{l+1}-p_1^{l})-t_{k}^{l,m}(p_2^{l+1}-p_2^{l})-c_{k}^{l,m}(p_r^{l+1}-p_r^{l})\bigg),\\&~~~~~~~~~~~~~~~~~~~~~~~~~~~~~~~(m,k)=(3,1),(m,k)=(3,2),
\label{eq:obj_fun_p_DC_p3}
\end{split}
\end{equation}
where by assuming $F_{k}=\ln2(p_{r}^{l}h_{k}+(p_{1}^{l}+p_{2}^{l})h_{r}+1)$, $G_{k}=\ln2(p_{r}^{l}h_{k}h_{r}p_{k^{\prime}}^{l}+p_{r}^{l}h_{k}+p_{k^{\prime}}^{l}h_{r}+1)$, and $J_{k}=\ln2(p_{r}^{l}h_{k}+2p_{k}^{l}h_{r}+2)$, we have $d_{1}^{l,1}=\frac{h_{r}}{F_{1}}$, $t_{1}^{l,1}=\frac{h_{r}}{F_{1}}$, $c_{1}^{l,1}=\frac{h_{1}}{F_{1}}$, $d_{2}^{l,2}=\frac{h_{r}}{F_{2}}$, $t_{2}^{l,2}=\frac{h_{r}}{F_{2}}$, $c_{2}^{l,2}=\frac{h_{2}}{F_{2}}$, $d_{2}^{l,1}=\frac{p_r^lh_{2}h_{r}+h_{r}}{G_{2}}$, $t_{2}^{l,1}=0$, $c_{2}^{l,1}=\frac{p_1^lh_{2}h_{r}+h_{2}}{G_{2}}$, 
$d_{1}^{l,2}=0$, $t_{1}^{l,2}=\frac{p_r^lh_{1}h_{r}+h_{r}}{G_{1}}$, $c_{1}^{l,2}=\frac{p_2^lh_{1}h_{r}+h_{1}}{G_{1}}$,
$d_{1}^{l,3}=\frac{2h_{r}}{J_{1}}$, $t_{1}^{l,3}=0$, $c_{1}^{l,3}=\frac{h_{1}}{J_{1}}$, $d_{1}^{2,3}=0$, $t_{2}^{l,3}=\frac{2h_{r}}{J_{2}}$, and $c_{1}^{2,3}=\frac{h_{2}}{J_{2}}$.

\end{proposition}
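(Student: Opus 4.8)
The plan is to establish the lower bound in Proposition~\ref{sca-power} by applying the SCA principle to each term of the DC sum-rate expressions in Lemma~\ref{DC}. Recall that each rate $R_{k}^{m}$ was written (via Lemma~\ref{DC}) as a difference of two concave logarithmic functions, say $g_{k}^{m}(p_r,p_1,p_2)-h_{k}^{m}(p_r,p_1,p_2)$, where both $g_{k}^{m}$ and $h_{k}^{m}$ are concave by Lemmas~\ref{con-log} and~\ref{concave2}. Since $g_{k}^{m}$ is already concave, it can be kept as is; the obstacle is the \emph{subtracted} concave term $-h_{k}^{m}$, which makes the expression non-concave. The standard DC trick is to linearize $h_{k}^{m}$ around the current iterate $(p_r^{l},p_1^{l},p_2^{l})$: because a concave function lies below any of its tangent planes, replacing $h_{k}^{m}$ by its first-order Taylor expansion at the current point gives $-h_{k}^{m} \ge -h_{k}^{m}(\cdot^{l}) - \nabla h_{k}^{m}(\cdot^{l})^{\top}\big((p_r,p_1,p_2)-(\cdot^{l})\big)$, hence a global concave lower bound.

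First I would, for each of the three mode/user cases, identify the subtracted concave log term from the appropriate equation in Lemma~\ref{DC}: for $(m,k)=(1,1)$ and $(2,2)$ it is $\log_2(p_{r}h_{k}+(p_{1}+p_{2})h_{r}+1)$; for $(m,k)=(1,2)$ and $(2,1)$ it is $\log_2(p_{r}h_{k}h_{r}p_{k'}+p_{r}h_{k}+p_{k'}h_{r}+1)$; and for $m=3$ it is $\frac{1}{2}\log_2(p_{r}h_{k}+2p_{k}h_{r}+2)$. Then I would compute the gradient of each such term at the iterate $(p_r^{l},p_1^{l},p_2^{l})$. Differentiating $\log_2(\cdot)$ introduces the factor $1/\ln 2$ and a denominator equal to the argument evaluated at the iterate — this is exactly where the quantities $F_k=\ln2(p_r^lh_k+(p_1^l+p_2^l)h_r+1)$, $G_k=\ln2(p_r^lh_kh_rp_{k'}^l+p_r^lh_k+p_{k'}^lh_r+1)$, and $J_k=\ln2(p_r^lh_k+2p_k^lh_r+2)$ come from. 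The partial derivatives with respect to $p_1$, $p_2$, $p_r$ then yield precisely the coefficients $d_k^{l,m}$, $t_k^{l,m}$, $c_k^{l,m}$ claimed in the statement; e.g.\ $\partial_{p_1}\log_2(p_rh_1+(p_1+p_2)h_r+1)=h_r/F_1$ gives $d_1^{l,1}=h_r/F_1$, and the vanishing coefficients (like $t_2^{l,1}=0$) arise because that variable does not appear in the corresponding subtracted term. Collecting the retained concave term $g_{k}^{m}$ with the linearized $-h_{k}^{m}$ reproduces equations~\eqref{eq:obj_fun_p_DC_p1}--\eqref{eq:obj_fun_p_DC_p3}.

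Finally I would verify the two remaining qualifiers in the statement: concavity and the non-decreasing property. Concavity is immediate — $g_{k}^{m}$ is concave by Lemma~\ref{con-log} or Lemma~\ref{concave2} (one must check the stated ratio conditions on the coefficients hold for each retained log argument, which is a routine coefficient-matching check), and subtracting an affine function preserves concavity. The non-decreasing (tightness/monotone improvement) property follows from the fact that at the expansion point itself the linearization is exact, so $R_{k,\text{lb}}^{l+1,m}$ evaluated at $(p_r^{l},p_1^{l},p_2^{l})$ equals the true $R_{k}^{m}$ at that point, and since $R_{k,\text{lb}}^{l+1,m}\le R_{k}^{m}$ everywhere (global lower bound), successively maximizing the lower bound cannot decrease the true objective across iterations. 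The main obstacle is purely bookkeeping: keeping track of which variable appears in which subtracted log across the six mode/user cases and matching the resulting partial derivatives to the eighteen listed coefficients without sign or factor-of-$\ln 2$ errors; there is no conceptual difficulty beyond the DC-linearization argument.
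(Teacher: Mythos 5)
Your proposal is correct and follows essentially the same route as the paper: invoke the DC decomposition of Lemma~\ref{DC}, keep the concave log term, linearize the subtracted concave log term around the $l^{\text{th}}$ iterate (equivalently, lower-bound the convex part by its first-order Taylor expansion), and read off the coefficients $d_k^{l,m}$, $t_k^{l,m}$, $c_k^{l,m}$ as the resulting partial derivatives, with tightness at the expansion point giving the non-decreasing property. The paper's own proof is just a terser statement of exactly this argument.
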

\begin{proof}
According to Lemma \ref{DC}, the rates of vehicles at (P1.3) can be shown as a DC function. 
On the other hand, we know that the first-order Taylor series expansion of a convex function provides a global lower-bound. Hence, at each iteration $l+1$ of the SCA method, we derive the first-order Taylor expansion of the convex parts of rates around the solution of the previous iteration $l$. The summation of these new affine Taylor approximations of convex parts and concave parts of rates will be a concave function as in (\ref{eq:obj_fun_p_DC_p1}), (\ref{eq:obj_fun_p_DC_p2}), and (\ref{eq:obj_fun_p_DC_p3}). Also, due to maximizing concave approximations at each iteration, the objective value of (P1.3) at the SCA method is non-decreasing with $l$. Hence, the proof is completed. 
\end{proof}
Proposition \ref{sca-power} shows that for a given power allocation at the $l^{\text{th}}$ iteration, the sum-rate of vehicles at (P1.3) is lower-bounded  by the summation of rates in (\ref{eq:obj_fun_p_DC_p1}), (\ref{eq:obj_fun_p_DC_p2}), and (\ref{eq:obj_fun_p_DC_p3}). It then follows that the optimal value of (P1.3) is lower-bounded by the optimal value of the following problem 
\begin{alignat}{2}
\text{(P1.4):}&~~\underset{\mathbf{P_1^{l+1}},\mathbf{P_2^{l+1}}, \mathbf{P_r^{l+1}}}{\text{max}}  ~~~~      &&\sum_{n=1}^{N}\sum_{s=1}^{S}\sum_{k=1}^{K} \alpha_{s,n}R_{k,\text{lb}}^{l+1,1}[n]\nonumber\\&&&~+\beta_{s,n}R_{k,\text{lb}}^{l+1,2}[n]+\gamma_{s,n}R_{k,\text{lb}}^{l+1,3}[n]\label{eq-power:obj_fun}\\
&\text{s.t.}&     &  \sum_{s=1}^{S} \alpha_{s,n}R_{k,\text{lb}}^{l+1,1}[n]+\beta_{s,n}R_{k,\text{lb}}^{l+1,2}[n]\nonumber\\&&&~+\gamma_{s,n}R_{k,\text{lb}}^{l+1,3}[n]\geq R_{k}^{th}, \forall n, \forall k, \label{eq-power:constraint-r1}\\
 &  &      &  (\ref{eq:constraint-ps}), (\ref{eq:constraint-pr}), (\ref{eq:constraint-noma}), (\ref{eq:constraint+}). \nonumber
\end{alignat}
Problem (P1.4) is a convex problem.
Note that we proved in Proposition \ref{sca-power} that the objective function of (P1.4) is non-decreasing over iterations and is globally upper-bounded by the optimal value of (P1.3). Therefore, the proposed sub-optimal algorithm converges. One can see the summary of the proposed iterative method for solving (P1.3) in Algorithm 2.
\begin{algorithm}
\caption{Iterative power allocation for (P1.3) with fixed trajectory planning.}\label{alg2}
\begin{algorithmic}[1]
\State  Initialize the power allocation for  $\mathbf{P}_{1}^{l},~\mathbf{P}_{2}^{l},~ \text{and}~\mathbf{P}_{r}^{l}$, and let $l=0$.
\State Find the binary coefficients $\alpha_{s,n},~\beta_{s,n},~ \text{and}~ \gamma_{s,n}$, for each time slot $n$ based on the UAV fixed trajectory (i.e., $\mathbf{X}$ \text{and} $\mathbf{Y}$) and locations of vehicles and BS (according to Table \ref{table:1}).
\Repeat
\State Solve the convex problem (P1.4) by the interior-point method and find the optimal solution $\mathbf{P}_{1}^{l+1},~\mathbf{P}_{2}^{l+1},~ \text{and}~\mathbf{P}_{r}^{l+1}$.
\State Update $l=l+1$.
\Until {convergence or a maximum number of iterations is reached.}
\end{algorithmic}
\end{algorithm}
\subsection{Iterative Power Allocation and Trajectory Planning}
In this subsection, we apply the AO method to solve the problem (P1). To this end, we solve the trajectory planning, operation mode selection, and power allocation sub-problems alternatively, i.e., at each iteration we first optimize the trajectory and determine operation mode coefficients. Then, using this trajectory and operation mode coefficients, we optimize the power allocation sub-problem. The associated algorithm is summarized in Algorithm 3.\par 
\begin{algorithm}
\caption{Iterative power allocation, trajectory planning and mode selection for P1.}\label{alg3}
\begin{algorithmic}[1]
\State  Initialize power allocation $\mathbf{P}_{1}^{l},~\mathbf{P}_{2}^{l},~ \text{and}~\mathbf{P}_{r}^{l}$ and UAV trajectory $\mathbf{X}^{l}$ and $\mathbf{Y}^{l}$, and let $l=0$.
\State Find the binary coefficients $\alpha_{s,n}^0,~\beta_{s,n}^0~ \text{and}~ \gamma_{s,n}^0$, for each time slot $n$ based on the UAV initial trajectory (i.e., $\mathbf{X}^0$ and $\mathbf{Y}^0$) and locations of vehicles and BS (according to Table \ref{table:1}).
\Repeat
\State Solve the convex problem (P1.2) with fixed power allocations $\mathbf{P}_{1}^{l},~\mathbf{P}_{2}^{l},~ \text{and}~\mathbf{P}_{r}^{l}$ by the interior-point method, and find the optimal solution for $\mathbf{X}^{l+1}$ and $\mathbf{Y}^{l+1}$.
\State Update the binary coefficients $\alpha_{s,n}^{l+1},~\beta_{s,n}^{l+1}~ \text{and}~ \gamma_{s,n}^{l+1}$, for each time slot $n$ based on the new UAV trajectory (i.e., $\mathbf{X}^{l+1}$ and $\mathbf{Y}^{l+1}$) according to Table \ref{table:1}. 
\State Solve the convex problem (P1.4) with the fixed UAV trajectory $\mathbf{X}^{l+1}$ and $\mathbf{Y}^{l+1}$ by the interior-point method, and find the optimal solution for $\mathbf{P}_{1}^{l+1},~\mathbf{P}_{2}^{l+1},~ \text{and}~\mathbf{P}_{r}^{l+1}$.
\State Update $l=l+1$.
\Until {convergence or a maximum number of iterations is reached.}
\end{algorithmic}
\end{algorithm}
 According to Proposition \ref{trajectory-iter-lb} and Proposition \ref{sca-power}, the optimal value of original problem (P1) is a global upper-bound for the optimal values of sub-problems (P1.2) and (P1.4), and hence it is also an upper-bound for the optimal value of Algorithm 3. Also, since Algorithm 3 performs Algorithm 1 and Algorithm 2 alternatively, the objective value of Algorithm 3 is non-decreasing with $l$. As a result, the proposed sub-optimal method at Algorithm 3 is guaranteed to converge. 
\subsection{Computational Complexity Analysis}
Now, the computational complexity of proposed iterative solutions for P1 in Algorithms 1, 2 and 3 is presented. At each iteration $l$ of AO method in Algorithm 1 and 2, computational complexity is dominated by solving convex problems in (P1.2) and (P1.4), respectively. These convex problems are solved using the interior point method. According to \cite{boyd2004convex}, the interior point method
method requires $\log(\frac{n_c}{t^0\varrho})/\log\varepsilon$ number of iterations
(Newton steps) to solve a convex problem, where $n_c$ is the total
number of constraints, $t^0$ is the initial point for approximating the accuracy of the interior-point
method, $0<\varrho\ll1$ is
the stopping criterion, and $\varepsilon$ is used for updating the accuracy
of the interior point method. For (P1.2) and (P1.4), number of constraints are $n_c^t=7N+1$ and $n_c^p=6N+2$, respectively. Note that superscripts $t$ (for trajectory) and $p$ (for power) indicate problems (P1.2) and (P1.4), respectively. Hence, the computational complexity of Algorithm 1 and 2 will be $\mathcal{O}(N_{L}^t(\frac{\log(\frac{N}{t^{0,t}\varrho^{t}})}{\log\varepsilon^{t}}))$ and $\mathcal{O}(N_{L}^p(\frac{\log(\frac{N}{t^{0,p}\varrho^{p}})}{\log\varepsilon^{p}}))$, where $N_L^t$ and $N_L^p$ are the number of iterations for convergence of Algorithm 1 and 2, respectively. At each iteration $l$ of Algorithm 3, computational complexity is dominated by solving two convex problems in (P1.2) and (P1.4) and hence, it will be $\mathcal{O}(N_{L}(\frac{\log(\frac{N}{t^{0,t}\varrho^{t}})}{\log\varepsilon^{t}}+\frac{\log(\frac{N}{t^{0,p}\varrho^{p}})}{\log\varepsilon^{p}}))$, where $N_L$ is the number of iterations for convergence of Algorithm 3.
\section{Minimum-Rate Maximization}
In this section, we aim to maximize a new objective function in which the throughput fairness  between vehicles and among different time slots is considered. Indeed, in delay-sensitive applications, we have to consider fairness and hence maximize the min-rate of vehicles. Note that based on Proposition \ref{proposition-min}, OMA has better min-rate performance in comparison to NOMA at high SNR regime, and hence we select OMA mode ($m=3$) for multiple access scheme. Assuming the same parameters with (P1), the optimization problem can be formulated as
\begin{alignat}{2}
\text{(P2):}~~&\underset{\mathbf{P}_{1},\mathbf{P}_{2},\mathbf{P}_{r},\mathbf{X},\mathbf{Y}}{\text{max}}       & &~~~~\min_{n\in \mathcal{N},k\in\mathcal{K}}~~ R_{k}^{3}[n]\label{eq:obj_funmin}\\
&\text{s.t.}&      &~~~~~ (\ref{eq:constraint-v}), (\ref{eq:constraint-s}), (\ref{eq:constraint-f}), (\ref{eq:constraint-range}),
(\ref{eq:constraint-ps}), (\ref{eq:constraint-pr}), (\ref{eq:constraint+}),  \nonumber
\end{alignat}
where $\mathcal{N}=\{1,2,...,N\}$, and $\mathcal{K}=\{1,2\}$. The objective function of (P2) is not concave with respect to trajectory and power variables. Hence, (P2) is a non-convex problem. In order to solve this problem, as with the solution to problem (P1), we use the AO method.
For the trajectory planning sub-problem with fixed power allocations, the objective function of (P2) is still a non-concave function. Therefore, we apply the SCA method in which we utilize the concave lower-bound approximation of vehicle rates derived in Proposition \ref{trajectory-iter-lb} for each iteration $l+1$ of the SCA method. Given that the minimum of concave functions is a concave function, the approximated objective function of (P2) is concave. For the power allocation sub-problem with fixed trajectory, 
the objective function of (P2) is not concave, and we apply the SCA method in which utilizing Proposition \ref{sca-power} we derive concave approximations for the rate of each vehicle. Due to the similarity of derivations with Section IV, we do not mention these derivations in this section. Finally, we apply the AO method as in Algorithm 3, to solve the problem (P2). 
\section{NUMERICAL RESULTS}
In this section, numerical results are provided to validate the proposed algorithms for trajectory planning and power allocation. The following default parameters are applied in the simulations except that we specify different values for them. We assume that the BS is located at the origin and the UAV starts and ends its flight at the same location $(x_s,y_s)=(x_f,y_f)=(200,300)$. Note that in this paper, the units of $x$ and $y$ in $(x,y)$ are in meters. We assume that UAV flight ranges are $(x_{\text{min}},x_{\text{max}})=(0,1000)$ and $(y_{\text{min}},y_{\text{max}})=(0,1000)$. The default velocity for the UAV is $v=30 ~\mathrm{m/s}$, and the UAV flies at fixed height $h=100~\mathrm{m}$. There are two vehicles in our system that move along a two-lane street in parallel with y-axis. Vehicle 1 and vehicle 2 start their path at $(x_1[0],y_1[0])=(700,100)$ and $(x_2[0],y_2[0])=(702,0)$, respectively. Vehicles move with the fixed velocity $v_1=v_2=15 ~\mathrm{m/s}$. 
The communication bandwidth is $B=10~\mathrm{MHz}$  with a
carrier frequency at 5 GHz, and a noise power spectrum density of $N_0=-174~\mathrm{dBm/Hz}$. The reference SNR at the distance $d_0=1~\mathrm{m}$ equals $\frac{\beta_0}{BN_0}=70~\mathrm{dB}$. We set the number of time slots as $N=600$ and assume that each time slot equals $\tau=0.1~\mathrm{s}$.  
The threshold superiority rate at Table \ref{table:1} has default value $R_{\text{th}}^S=0.1~\mathrm{bps/Hz}$. Also, we assume two vehicles have the same minimum rate requirements $R_{1}^{t}=R_{2}^{t}=1~\mathrm{bps/Hz}$. According to this rate requirement, the default value for the maximum transmit power of the BS and UAV equals $\bar{P}_s=\bar{P}_r=0.5~\mathrm{W}$.\par
In order to solve P1 and P2, we should find a feasible initial values for power allocations and the trajectory of the UAV. For this end, we assume that the BS power is allocated equally between two vehicles and among different time slots, i.e., $p_1[n]=p_2[n]=0.5\bar{P}_s,~\forall n$. Also, we assume that the UAV power is equally allocated among different time slots, i.e., $p_r[n]=\bar{P}_r,~\forall n$. 
In order to find an initial feasible trajectory, we consider straight line between the start and end locations of the UAV as an initial trajectory. 
With the proposed initial power allocations and trajectory, all of the constraints of P2 are satisfied and hence, these initial points are in feasible region of P2. For P1, we have one more constraint rather than P2, i.e., constraint (14), which guarantees the minimum target rate $R_{k}^{t}$  for vehicle $k$ at time slot $n$.  We use the allocated powers and trajectory derived from solving P2 as an initial point of P1. 
\par
\begin{figure}[t]
\begin{subfigure}{0.5\textwidth}
\includegraphics[width=\linewidth]{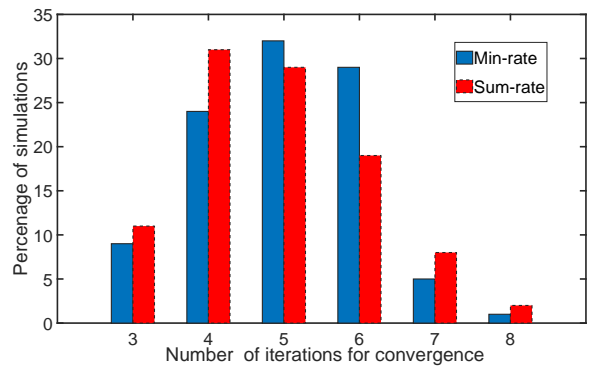} 
\caption{Histogram of the number of iterations for convergence}
\label{fig:hist}
\end{subfigure}
\begin{subfigure}{0.5\textwidth}
\includegraphics[width=\linewidth]{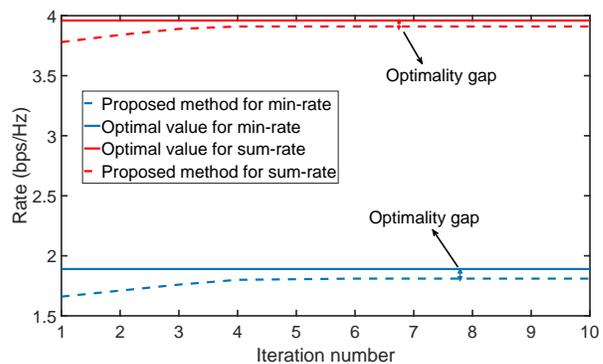}
\caption{Optimality gap}
\label{fig:optimality}
\end{subfigure}
\caption{Histogram of the number of iterations for convergence, and performance gap between proposed sub-optimal methods and optimal values for sum-rate and min-rate problems.}
\label{fig:convergence}
\end{figure}
Fig. \ref{fig:convergence} shows the histogram of the number of iterations for the convergence of the proposed algorithms and performance gap between these sub-optimal methods and optimal values. 
In Fig. \ref{fig:hist}, we see that the objective functions of (P.1) and (P.2) in (\ref{eq:obj_fun}) and (\ref{eq:obj_funmin}) converge in a few iterations which shows the efficiency of proposed algorithms. For optimization problem (P1), the convergence histogram  of Algorithm 3 has been plotted at this figure. Note that Algorithms 1 and 2 which solve problems (P1.1) and (P1.3) are special cases of Algorithm 3 when power allocation and trajectory are fixed, respectively. 
Also, note that this histogram is based on convergence results of 100 simulations for $N=600$ time slots. In Fig. \ref{fig:optimality}, we can see the performance gap between the proposed sub-optimal algorithms and the optimal values for sum-rate and min-rate problems. Also, this figure indicates the convergence of the proposed algorithms in a few iterations. In order to attain optimal values in (P1) and (P2), we perform exhaustive search over all of the feasible values for power allocation coefficients and UAV trajectory. Because a two-dimensional search for UAV trajectory over $N$ time slots is very complex, we consider a small scale case in which we search for the optimal location for UAV placement in the XY-plane. For this end, first, we discretize the power ranges in (\ref{eq:constraint-ps}) and (\ref{eq:constraint-pr}) and the UAV flight range in (\ref{eq:constraint-range}). Then, we find the optimal power allocation and UAV deployment location that maximizes sum-rate and min-rate problems. In this figure, we assume that $(y_1,y_2)=(400,500)$. We can see in this figure that the proposed sub-optimal algorithms for (P1) and (P2) have small performance gaps with optimal values.\par
\begin{figure}[t]
\begin{subfigure}{0.5\textwidth}
\includegraphics[width=1\linewidth]{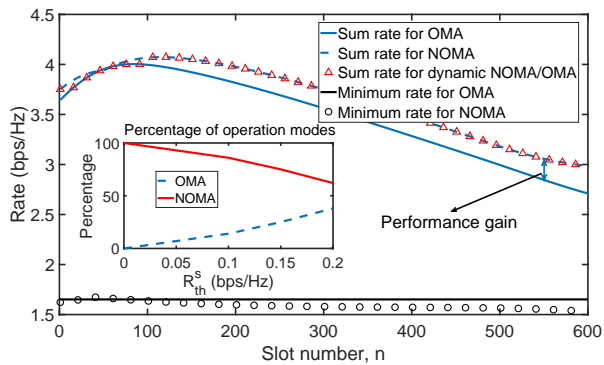} 
\caption{Sum-rate and min-rate with dynamic NOMA/OMA}
\label{fig:rates-dynamic}
\end{subfigure}
\begin{subfigure}{0.5\textwidth}
\includegraphics[width=1\linewidth]{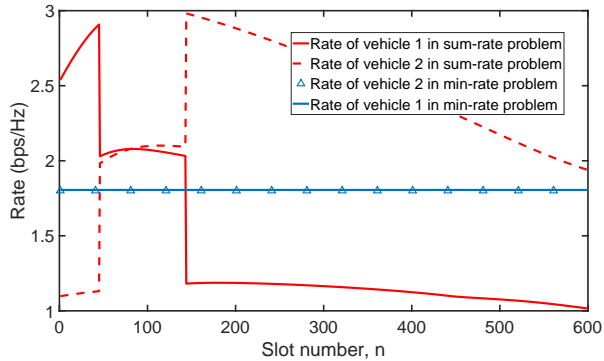}
\caption{Rates of vehicles for min-rate and sum-rate problems}
\label{rate-vehicles}
\end{subfigure}
\caption{Achievable rates with the dynamic NOMA/OMA for the sum-rate and min-rate problems versus slot number $n$.}
\label{fig:rates}
\end{figure}
Fig. \ref{fig:rates} shows the achievable rates with the OMA, NOMA, and dynamic NOMA/OMA schemes for the sum-rate and min-rate problems versus slot number $n$. Fig. \ref{fig:rates-dynamic} shows the sum-rate and min-rate of vehicles at $N=600$ time slots. In this figure, we assume that vehicle 1 and vehicle 2 start their path at $(x_1[0],y_1[0])=(700,300)$ and $(x_2[0],y_2[0])=(702,0)$, respectively. We can see that for the sum-rate problem, because two vehicles have similar distances to the UAV at initial time slots, OMA has close sum-rate performance with NOMA. Hence, our dynamic scheme selects OMA because of its lower decoding complexity. The channels of vehicles to UAV are degraded at middle and final time slots which leads to superior sum-rate performance of NOMA in comparison to OMA. Our dynamic scheme selects NOMA mode for these time slots. In this figure, we can also see the percentage of operation modes (i.e., OMA or NOMA modes) versus superiority rate ($R_{th}^{S}$). We see that by increasing $R_{th}^{S}$, the proposed dynamic scheme selects OMA scheme at more time slots. This leads to less complexity (due to lack of SIC at OMA scheme) at vehicles with compromising more sum-rate.
We can also see that min-rate of OMA  is superior to NOMA.
Fig. \ref{rate-vehicles} shows the achievable rates of each vehicle at $N=600$ time slots. For the sum-rate problem, the rates of vehicles in the initial time slots are much better than in the final time slots, and this is due to the existence of better communication links. 
In the min-rate problem, the rates of the two vehicles are identical at all of the time slots in order to have fairness. Note that OMA mode is applied for the min-rate problem.\par
\begin{figure}[t]
\begin{subfigure}{0.5\textwidth}
\includegraphics[width=\linewidth]{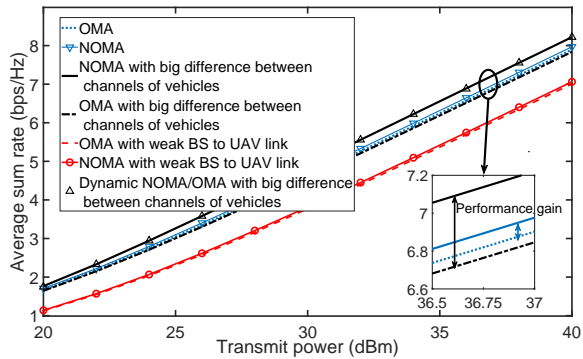} 
\caption{Average sum-rate for different channel gains of two vehicles, and weak BS-to-UAV link.}
\label{fig:sub-sum-rate-ps}
\end{subfigure}
\begin{subfigure}{0.5\textwidth}
\includegraphics[width=\linewidth]{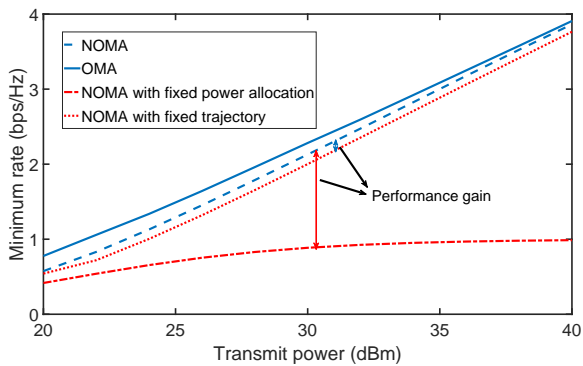}
\caption{Minimum rate for the cases where power allocation and/or trajectory are optimized.}
\label{fig:sub-min-rate-ps}
\end{subfigure}
\caption{Sum-rate and min-rate of vehicles versus transmit power with NOMA/OMA.}
\label{fig:transmit-power}
\end{figure}
Fig. \ref{fig:transmit-power} indicates the average sum-rate and min-rate of vehicles versus transmit power for the sum-rate (P1) and min-rate (P2) problems, respectively. We assumed that both the BS and UAV have the same transmit power. Fig. \ref{fig:sub-sum-rate-ps} shows the  sum-rate of vehicles averaged over $N=600$ time slots. 
In Fig. \ref{fig:sub-sum-rate-ps}, one can see that when vehicles are close to each other at their default paths, less gain is achieved by applying NOMA rather than OMA. This observation matches with the results of Proposition \ref{proposition-sum}. In Fig. \ref{fig:sub-sum-rate-ps}, we also assume a scenario that vehicle 1 and vehicle 2 are in two different streets and they start their path at $(x_1[0],y_1[0])=(600,0)$ and $(x_2[0],y_2[0])=(900,0)$, respectively. One can see that we obtain more gain from NOMA in comparison to OMA in this case. We can also see that the proposed dynamic NOMA/OMA scheme has the same sum-rate performance with NOMA. Note that proposed dynamic NOMA/OMA has less decoding complexity than NOMA. We also see the case where the BS-to-UAV link is worse than the UAV-to-vehicle links. We assume that the UAV is located at a fixed position $(650,500)$ in order to have weak channel with the BS. We also assume that the vehicles move along two different streets. We can see that, matching with Proposition \ref{proposition-sum}, in spite of the big difference between channels of vehicles and because of weak links between the BS and UAV, the NOMA and OMA schemes has a close sum-rate performance. Fig. \ref{fig:sub-min-rate-ps} shows the min-rate of vehicles during all $N$ time slots. We can see that OMA  has better performance than NOMA at the high SNR regime which matches with the results of Proposition \ref{proposition-min}.  We can also see the min-rate of vehicles with the NOMA scheme for the case where only the power allocation or trajectory of the UAV is optimized. We see that the min-rate of these cases is worse than the case where both trajectory and power are optimized. 
In the NOMA scheme, interference of the strong vehicle over the weak vehicle leads to saturation of the weak vehicle rate at the high SNRs. Hence, we see that the min-rate of the fixed power allocation case tends to a fixed value by increasing transmit power.\par 
\begin{figure}[t]
\begin{subfigure}{0.5\textwidth}
\includegraphics[ width=\linewidth]{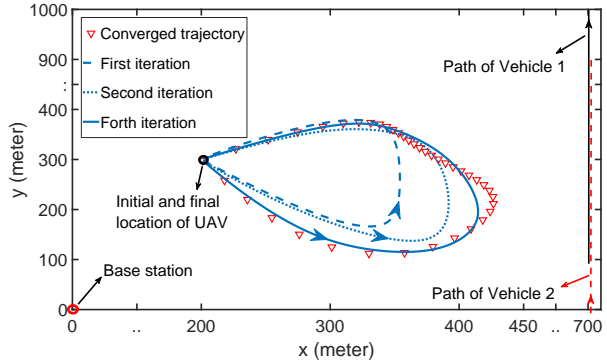} 
\caption{Evolution of UAV trajectory at min-rate problem}
\label{fig:sub-traj-min}
\end{subfigure}
\begin{subfigure}{0.5\textwidth}
\includegraphics[ width=\linewidth]{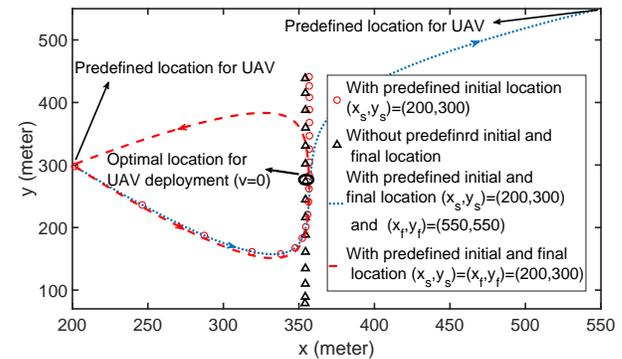}
\caption{Different scenarios at sum-rate problem}
\label{fig:sub-traj-sum}
\end{subfigure}
\caption{Trajectory of the UAV for different scenarios. (a) Evolution of UAV trajectory at the min-rate problem. (b) Trajectory for sum-rate problem when the initial and/or final points of flight are not predefined, and for two different velocities of the UAV, i.e., $v=0,30 ~\mathrm{m/s}$.}
\label{fig:trajectory}
\end{figure}
Fig. \ref{fig:trajectory} shows the optimized trajectory of the UAV and the path of vehicles at $N=600$ time slots for different scenarios.  In Fig. \ref{fig:sub-traj-min}, one can see two dimensional evolution of the UAV trajectory for the min-rate problem at different iterations of proposed algorithm. At the converged trajectory, the UAV goes forward to be close to the vehicles at middle time slots. This causes fairness in rates of vehicles between initial and middle time slots. Then, the UAV goes up to follow vehicles and provide fair rates for final time slots. Fig. \ref{fig:sub-traj-sum} indicates the trajectory of UAV for different scenarios of the sum-rate problem. For the case that initial and final locations of UAV flight are the same (i.e., $(x_s,y_s)=(x_f,y_f)=(200,300)$), the UAV flies many time slots over the line $x=356~\mathrm{m}$ in which vehicles have better sum-rate. We can see the trajectory for the case where the initial and final locations of the UAV are different points  $(x_s,y_s)=(200,300)$ and $(x_f,y_f)=(550,550)$, respectively. With these locations and velocity $v=30~\mathrm{m/s}$, the UAV is interested in spending more time slots over the line $x=356 ~\mathrm{m}$. We also solved the deployment problem for the UAV which is equivalent to the special case of the sum-rate problem where the UAV has no constraint for initial or final locations and its velocity is $v=0~\mathrm{m/s}$. In this case, the optimal point for deploying the UAV is $(x_{\text{opt}},y_{\text{opt}})=(354,264)$ which is on the line $x=354 ~\mathrm{m}$.  In Fig. \ref{fig:sub-traj-sum} we depicted the trajectory of the UAV for the case where the UAV has no predefined initial or final location. At this case, the UAV only flies over the line $x=354 ~\mathrm{m}$ following vehicles. This shows that in contrast to a fixed user case in which the UAV hovers over the best deployment location, in vehicular networks, the UAV tends to follow the vehicles to make the best performance. Finally, one can see the trajectory of the UAV when the UAV has to start its flight from a predefined location. In this case, the UAV goes to the line $x=356 ~\mathrm{m}$ and flies over this line towards vehicles.\par
\begin{figure}[t]
\begin{subfigure}{0.5\textwidth}
\includegraphics[width=\linewidth]{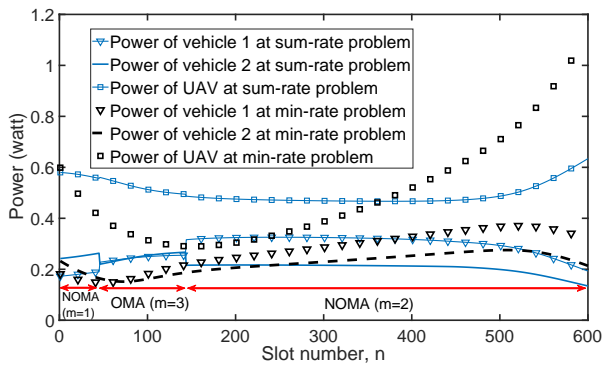} 
\caption{Min/sum-rate problems with default parameters}
\label{fig:sub-power-min-sum}
\end{subfigure}
\begin{subfigure}{0.5\textwidth}
\includegraphics[width=\linewidth]{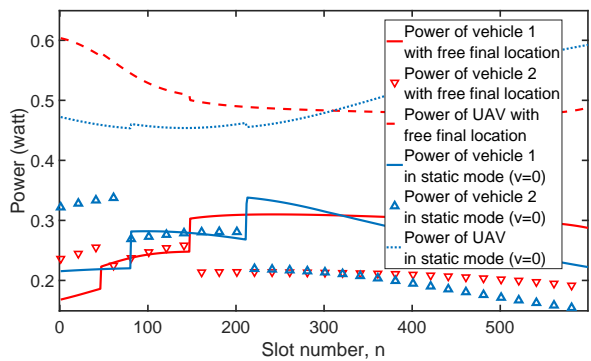}
\caption{Sum-rate problem with different scenarios.}
\label{fig:sub-power-free}
\end{subfigure}
\caption{Power allocation for the vehicles and the UAV with dynamic NOMA/OMA. (\ref{fig:sub-power-min-sum}) Min-rate and sum-rate problems with default parameters. (\ref{fig:sub-power-free}) Sum-rate problem with static relaying and the case where the UAV has no constraint for its final location.}
\label{fig:power}
\end{figure}
Fig. \ref{fig:power} indicates the allocated power with the dynamic NOMA/OMA scheme for each vehicle at the BS and UAV during the flight time. We can see these allocated powers for the sum-rate and min-rate problems with default parameters at Fig. \ref{fig:sub-power-min-sum}. For the min-rate problem, it is clear that in order to have fair rates among different time slots, the UAV and BS consume much power at final time slots in which vehicles are far from the UAV.  
  We can also see power allocation for the sum-rate problem in this figure. For the first time slots, due to establishing good links between the UAV and vehicles, more power is allocated to UAV to increases the sum-rate. 
Fig. \ref{fig:sub-power-free} shows power allocation for the sum-rate problem with different scenarios. For the case where there is no constraint for the final location of the UAV, the UAV consumes more power at initial time slots because at these time slots the vehicles are close to the UAV. We also depicted the allocated power for the static case in which the UAV has deployed at the optimal position $(x_{\text{opt}},y_{\text{opt}})=(354,264)$. 
 We see in Fig. \ref{fig:sub-power-free} that the BS consumes more power at initial time slots. The reason is that at these time slots UAV-to-vehicle channels have better links and by allocating more power at the BS, better sum-rate can be achieved by the whole system. \par
\section{CONCLUSION}
 In this paper, power allocation and trajectory planning of a drone assisted NR-V2X network with dynamic NOMA/OMA has been investigated. 
We showed that in a two-user network with a dedicated AF relay, NOMA always has better or equal sum-rate performance in comparison to OMA at high SNR regime.  Due to the complexity of SIC decoding at NOMA, we proposed a dynamic NOMA/OMA scheme in which the OMA mode is selected when applying NOMA has only negligible gain.
   We have formulated two optimization problems which maximize the sum-rate and min-rate of the two vehicles. These optimization problems were not convex, and hence we applied the AO and SCA methods to find power allocation and trajectory of the UAV separately. 
   Simulation results showed that the proposed dynamic scheme can achieve less decoding complexity in comparison to NOMA with a negligible sum-rate compromise.
%
\appendices
\section{PROOF OF LEMMA 1}\label{lemma 1}
In order to prove that $f$ is convex, we calculate its first and second order derivatives as 
\begin{equation}
   \nabla f=
\begin{bmatrix}
\frac{\partial f}{\partial x}\\
\frac{\partial f}{\partial y}
\end{bmatrix}
=
\begin{bmatrix}
\frac{-(a+cy)}{u^2}\\
\frac{-(b+cx)}{u^2}
\end{bmatrix},
\end{equation}

\begin{equation}
\nabla^{2} f=
\begin{bmatrix}
\frac{\partial f  ^{2}}{\partial  x ^{2}} & \frac{\partial f  ^{2}}{\partial  y \partial  x} \\
\frac{\partial f^{2}}{\partial  x \partial y} & \frac{\partial f  ^{2}}{\partial  y ^{2}}
\end{bmatrix}
=\frac{1}{u^{3}}
\begin{bmatrix}
2(a+cy)^2& v\\
v&2(b+cx)^2
\end{bmatrix}, 
\end{equation}

where $u=ax+by+cxy+d$ and $v=acx+cby+c^{2}xy+2ab-cd$. In order to show that function $f$ is a convex function, we must prove that $\nabla^{2} f$ is a positive definite matrix (i.e., $\mathbf{t}\nabla^{2} f\mathbf{t}^{T}>0$ for every nonzero vector $\mathbf{t}=[t_1~~ t_2]$). Performing some calculations leads to $\mathbf{t}\nabla^{2} f\mathbf{t}^{T}=\frac{2(a+cy)^{2}t_{1}^{2}+2(b+cx)^{2}t_{2}^{2}+2(acx+cby+c^{2}xy+2ab-cd)t_{1}t_{2}}{{u^3}}$.
For writing the numerator of this equation in the square form we must have $ab=cd$. Then, we have
\begin{equation}\label{tft}
    \mathbf{t}\nabla^{2} f\mathbf{t}^{T}=
    \frac{(t_{1}(a+cy)+t_{2}(b+cx))^{2}+(a+cy)^{2}t_{1}^{2}+(b+cx)^{2}t_{2}^{2}}{{u^3}}
\end{equation}
One can see that (\ref{tft}) is always positive for $u=ax+by+cxy+d>0$. For inverse proof, assuming $ab=cd$, we can show function f as $f(x,y)=\frac{1}{((\frac{c}{b})x+1)(by+d)}$. We know that the function $g(x,y)=\frac{1}{xy}$ is a convex function for $xy>0$. On the other hand, the composition of a function with a linear function does not change convexivity. Hence, $f(x,y)=g((\frac{c}{b})x+1,by+d)$ is a convex function for $((\frac{c}{b})x+1)(by+d)>0$.
Therefore, the proof is completed.

\section{PROOF OF PROPOSITION \ref{trajectory-iter-lb} }\label{appendix-traj}
 Assuming $\psi_{r}=\frac{\sigma^2}{h_{r}}=\frac{\sigma^2(x^{2}+y^{2}+h^{2})}{\beta_{0}}$, $\psi_{i}=\frac{\sigma^2}{h_{i}}=\frac{\sigma^2((x-x_{i})^{2}+(y-y_{i})^{2}+h^{2})}{\beta_{0}}$ for $i=1,2$, and replacing them in vehicle SINR in (\ref{r1-mode1}), (\ref{r2-mode1}), (\ref{r2-mode2}), (\ref{r1-mode2}), and (\ref{rate-oma}), the SINR of vehicle $k$ equals
 \begin{equation}\label{sinr11}
    \gamma_{k}^{m}=\frac{p_{r}p_{k}}{p_{r}\psi_{r}+(p_{1}+p_{2})\psi_{k}+\psi_{r}\psi_{k}},
\end{equation} 
for $(k,m)=(1,1),~(k,m)=(2,2)$,
 \begin{equation}\label{sinr21}
   \gamma_{k}^{m}=\frac{p_{r}p_{k}}{p_{r}\psi_{r}+(p_{1}+p_{2})\psi_{k}+\psi_{r}\psi_{k}+p_{k^\prime}p_{r}},
\end{equation} 
for $(k,m)=(2,1),~(k,m)=(1,2)$, and
\begin{equation}\label{sinroma}
       \gamma_{k}^{m}=\frac{p_{r}p_{k}}{p_{r}\psi_{r}+p_{k}\psi_{k}+\psi_{r}\psi_{k}},
\end{equation}
for $(k,m)=(1,3),~(k,m)=(2,3)$,
 where $k^\prime$ shows the vehicle other than vehicle $k$. Based on Lemma \ref{lemma1}, in order to convert the SINRs (\ref{sinr11}), (\ref{sinr21}), and (\ref{sinroma}) to a convex function, we have to add fixed terms $(p_{1}+p_{2})p_{r}$, $p_{k}p_{r}$, and $p_{k}p_{r}$ to the denominators of these equations, respectively.   
Note that these new SINRs are convex with respect to $\psi_{sr}$, $\psi_{r1}$, and $\psi_{r2}$, though they are not convex with respect to $\mathbf{X}$ and $\mathbf{Y}$. 
We know that the first-order Taylor series expansion of a convex function $f(z)$ provides a global lower-bound, i.e., $f(z)\geq f(z_0)+\nabla f(z_0)^T(z-z_0)$.
Considering this point that we have a convex approximation of SINRs with respect to $\psi_{sr}$, $\psi_{r1}$, and $\psi_{r2}$, we can approximate them at each iteration $l+1$ by their first-order Taylor series expansion around the solution of previous iteration $l$. Therefore, at the SCA method, the SINR of vehicle $k$ at the $(l+1)^{\text{th}}$ iteration can be approximated by
$\gamma_{k,\text{lb}}^{l+1,m}\approx\gamma_{k,\text{lb}}^{l,m}+d_{k,r}^{l,m}(\psi_{r}^{l+1}-\psi_{r}^{l})+d_{k,k}^{l,m}(\psi_{k}^{l+1}-\psi_{k}^{l})$.
It is clear that $\gamma_{k,\text{lb}}^{l+1,m}$ is concave with respect to $\mathbf{X}^{l+1}$ and $\mathbf{Y}^{l+1}$. Also, the composition of a concave function with a logarithm function is a concave function, and hence, $R_{k,\text{lb}}^{l+1,m}=\log_{2}(1+\gamma_{k,\text{lb}}^{l+1,m})$ is concave. Finally, note that due to applying first-order expansion for a convex function, we have $R_{k}^{l+1,m}>R_{k,\text{lb}}^{l+1,m}$, and due to the  maximization of a concave function at each iteration $l$, we have $R_{k,\text{lb}}^{l+1,m}>R_{k,\text{lb}}^{l,m}$. Considering this point that Taylor expansion of a function around an initial point is exactly equal to the value of the original function at that point (i.e., $R_{k,\text{lb}}^{l,m}=R_{k}^{l,m}$), we can conclude $R_{k}^{l+1,m}>R_{k}^{l,m}$. Hence, the rates $R_{k}^{l,m}$ are non-decreasing with $l$ and the proof is completed.




\ifCLASSOPTIONcaptionsoff
  \newpage
\fi



\bibliographystyle{IEEEtran}
\bibliography{main}
%

%







\end{document}